\documentclass[a4paper,11pt]{article}
\usepackage{fullpage}
\usepackage{microtype} 
\usepackage{pdfpages}
\usepackage[T1]{fontenc}
\usepackage{mathtools,amsmath,xspace}
\usepackage{amsthm}
\usepackage{amssymb}
\usepackage{tabularx}
\usepackage{multirow}
\usepackage{hyperref}
\newtheorem{theorem}{Theorem} 
\newtheorem{lemma}[theorem]{Lemma}

\newtheorem{corollary}[theorem]{Corollary}

\newtheorem{proposition}[theorem]{Proposition}

%\graphicspath{{figs/}}

\newcommand{\homoP}{\mathcal{P}}
\newcommand{\homoPt}{\mathcal{P}_{\theta}}
\newcommand{\conv}{\mathcal{P}}
\newcommand{\ds}{\ensuremath{\mathrm{DS}(n,s)}}
\newcommand{\dsa}{\ensuremath{\mathrm{DS}(n,s,m)}}
\newcommand{\dsn}[2]{\ensuremath{\lambda_{#2}(#1)}}
\newcommand{\dsan}[3]{\ensuremath{\lambda_{#2,#3}(#1)}}

\newcommand{\ex}[2][]{\ensuremath{E_{#1}(#2)}}
\newcommand{\exf}[1]{\ensuremath{E_{#1}}}

\newcommand{\len}[2][]{\ensuremath{h_{#1}(#2)}}
\newcommand{\lenf}[1]{\ensuremath{h_{#1}}}

\newcommand{\vc}[2]{\ensuremath{\vec{s}_{#1,#2}}}
\newcommand{\pt}[2]{\ensuremath{p_{#1,#2}}}
\newcommand{\env}{\ensuremath{\mathcal{F}}}
\newcommand{\edt}{\textsf{eDT}\xspace}

\newcommand{\edtt}{\ensuremath{\textsf{eDT}_{\theta}\xspace}}

\newcommand{\aff}{\Phi}
\newcommand{\dangle}[1]{\measuredangle{#1}}
\newcommand{\ora}[1]{\overrightarrow{#1}}

\newcommand{\myparagraph}[1]{\bigskip\noindent{\textbf{#1}}}

\title{Largest similar copies of convex polygons 
amidst polygonal obstacles\thanks{This research was supported by the Institute of Information \& communications 
Technology Planning \& Evaluation(IITP) grant funded by the Korea government(MSIT)
(No. 2017-0-00905, Software Star Lab (Optimal Data Structure and Algorithmic Applications in Dynamic Geometric Environment)) and (No. 2019-0-01906, Artificial Intelligence Graduate School Program(POSTECH)).}}

\author{Taekang Eom\thanks{Department of Computer Science and Engineering, Pohang University of Science and Technology, Pohang, Korea. \texttt{\{tkeom0114,juny2400\}@postech.ac.kr}}
\and Seungjun Lee\footnotemark[2]
\and Hee-Kap Ahn\thanks{Department of Computer Science and Engineering, Graduate School of Artificial Intelligence, Pohang University of Science and Technology, Pohang, Korea. \texttt{heekap@postech.ac.kr}}
}

\begin{document}
\date{}
\maketitle

\begin{abstract}
  Given a convex polygon $P$ with $k$ vertices and a polygonal domain
  $Q$ consisting of polygonal obstacles with total size $n$ in the plane, we study
  the optimization problem of finding a largest similar copy of $P$ that can be
  placed in $Q$ without intersecting the obstacles. We improve the
  time complexity for solving the problem to
  $O(k^2n^2\dsn{k}{4}\log{n})$.  This is progress of improving the
  previously best known results by Chew and Kedem [SoCG89, CGTA93] and
  Sharir and Toledo [SoCG91, CGTA94] on the problem in more than 25
  years.
\end{abstract}

\section{Introduction}\label{se:intro}
Finding a largest object of a certain shape that can be placed in a
polygonal environment has been considered as a fundamental problem in
computational geometry. This kind of optimization problem arises in
various applications, including the metal industry where we want to find a
largest similar pattern containing no faults in a piece of material.
There is also a correspondence to motion planning
problems~\cite{o1985retraction,KedemSharir1990,DCGhandbook} and shape
matching~\cite{Fleischer1992}.

\myparagraph{Polygon placement.} In the \textit{polygon containment
  problem}, we are given a container and a fixed shape, and want to
find a largest object of the shape that can be inscribed in the
container.  There are various assumptions on the object to be placed,
the motions allowed, and the environment the object is to be placed
within.  In many cases, the container is a convex or simple polygon,
possibly with holes.  Typical shapes are squares, triangles with fixed
interior angles, and rectangles with fixed aspect ratios.  For the
motions, we may allow translation or both translation and rotation,
together with scaling.  When scaling is not allowed, the problem is to
find a copy of a given object under translation or rigid motion
that can be inscribed in a
polygon~\cite{Chazelle1983,AvnaimBoissonnat88}.
When both translation and scaling are allowed, the objective becomes
to find a largest homothetic copy of a given object that can be
inscribed in a polygonal domain~\cite{fortune1985fast,LevenSharir87}.
When rotation is allowed, together with translation and scaling, the
problem is to find a largest similar copy of a given object and it
becomes more involved; it may require to capture every change induced
by the rotation to the underlying structure maintained by the
algorithms, and therefore the complexity of the algorithms may depend
on the changes during the rotation.

In this paper, we consider the polygon containment problem under
translation, rotation, and scaling.  We aim to find a largest similar
copy of a given convex polygon $P$ with $k$ vertices that can be
inscribed in a polygonal domain $Q$ consisting of $n$ points and line
segments.  This problem has been considered
fairly well in computational geometry community for many years. See
Chapter 50 of the Handbook of Discrete and Computational
Geometry~\cite{DCGhandbook}.

The earliest result was perhaps the SoCG'89 paper by Chew and
Kedem~\cite{chew1989placing}.  They considered the problem and gave an
incremental technique for handling all the combinatorial changes to
the Delaunay triangulation of the polygonal domain $Q$ under the
distance function induced by the input polygon $P$ while $P$ is
rotating.  They gave an upper bound $O(k^4n\dsn{kn}{4})$ on the
combinatorial changes, that is, the number of \emph{critical
  orientations}, together with a deterministic
$O(k^4n\dsn{kn}{4}\log{n})$-time algorithm, where $\dsn{n}{s}$ the
length of the longest Davenport--Schinzel sequence of order $s$
including $n$ distinct symbols.  A few years later, the bound was
improved to $O(k^4n\dsn{n}{3})$ by them, and thus the running time of
the algorithm became $O(k^4n\dsn{n}{3}\log{n})$~\cite{chew1993convex}.

Toledo~\cite{Toledo91}, and Sharir and
Toledo~\cite{sharir1994extremal} studied this problem (they called
this problem \emph{the extremal polygon containment problem}) and
applied the motion-planning algorithm~\cite{KedemSharir1990} to solve
this problem.  They gave an algorithm with running time
$O(k^2n\dsn{kn}{4}\log^3(kn)\allowbreak\log\log(kn))$ that uses the
parametric search technique of Megiddo~\cite{megiddo1983applying}.

These two results, $O(k^4n\dsn{n}{3}\log{n})$ and
$O(k^2n\dsn{kn}{4}\log^3(kn)\allowbreak\log\log(kn))$, are comparable
to each other. The latter one implies a better time bound 
for large $k$ ($k>n$) while the former one implies a better time bound
for small $k$.

There was a randomized algorithm by Agarwal et
al.~\cite{agarwal1999motion} that finds a largest similar copy in
$O(kn\dsn{kn}{6}\log^3(kn)\log^2{n})$ expected time using parametric
search technique of Megiddo~\cite{megiddo1983applying}.  The problem
was also considered for special cases.  Agarwal et al.~\cite{AAS98}
considered the problem of finding largest similar copy of a given
convex $k$-gon with contained in a convex $n$-gon and gave an
$O(kn^2\log{n})$-time algorithm.  Very recently, there were results on
two variants.  Bae and Yoon~\cite{bae2020empty} gave an
$O(n^2\log{n})$-time algorithm for finding a largest empty square
among a set of $n$ points in the plane. Lee et
al.~\cite{lee2020largest} gave an algorithm for finding a largest
triangle with fixed interior angles in a simple polygon with $n$
vertices in $O(n^2\log{n})$ time.

However, no improvement to the worst-case time bounds by Chew and
Kedem, and Sharir and Toledo has been known for the polygon placement
problem.

\myparagraph{New result.}  We make progress of improving the upper
bound on the combinatorial changes considered during the rotation and
the algorithm to compute a largest similar copy.  We present an upper
bound $O(k^2n^2\dsn{k}{4})$ on the combinatorial changes, and
this directly improves the time bound for the algorithm to
$O(k^2n^2\dsn{k}{4}\log{n})$.  This improves the previously best known
results by Chew and Kedem~\cite{chew1993convex} and Sharir and
Toledo~\cite{sharir1994extremal} in more than 25 years.

Compared to the combinatorial upper bound $O(k^4n\dsn{n}{3})$ by Chew
and Kedem, our bound is smaller than their bound asymptotically for
both $k$ and $n$. Since
$\dsn{k}{4}=o(k\log^*k)$~\cite{szemeredi1974problem} and
$\dsn{n}{3}=\Theta(n\alpha(n))$, our bound is $o(k^3n^2\log^*k)$ while
their bound is $O(k^4n^2\alpha(n))$.  Therefore, our algorithm
outperforms theirs.  Compared to the time bound
$O(k^2n\dsn{kn}{4}\log^3(kn)\allowbreak\log\log(kn))$ by Sharir and
Toledo~\cite{sharir1994extremal}, our running time outperforms theirs
for both $k$ and $n$ without resorting to parametric search technique,
because
$n\dsn{k}{4}\log n=o(\dsn{kn}{4}\log^3(kn)\allowbreak\log\log(kn))$.

Thus our result improves the best result for the problem introduced in
Chapter 50 of the Handbook of Discrete and Computational
Geometry~\cite{DCGhandbook}, and the result could be a stepping stone
to closing the problem.

\myparagraph{Overview of techniques.}  We achieve the improved upper
bound by carefully analyzing the combinatorial changes in the edge
Delaunay triangulation of $Q$ (to be defined later, shortly \edt)
while rotating $P$ and by reducing the candidate size to consider for
the changes.  Following the approach of Chew and
Kedem~\cite{chew1993convex}, our strategy consists of two parts:
Counting the combinatorial changes in \edt for a constant $k$, and
then counting the combinatorial changes with respect to $k$.

\begin{enumerate}
\item In the first part, we analyze the combinatorial changes for a
  fixed $k$.  We consider a family of functions defined for each
  vertex and edge of $P$ and compute their lower envelope. Since there
  are $O(k)$ vertices and edges of $P$, we compute $O(k)$ lower
  envelopes. We show that the complexity of each lower envelope is
  $O(n)$.  Then we compute the breakpoints on the lower envelope of
  the $O(k)$ lower envelopes.  To bound the number of combinatorial
  changes in \edt, we consider a placement of a scaled copy of $P$
  such that a vertex of $Q$ and a vertex of $P$ are in contact, which
  we call a \emph{hinge}, and show that the number of breakpoints on
  the lower envelope defined for each hinge is $O(n)$.  Since there
  are $O(n)$ hinges for a constant $k$, the number of combinatorial
  changes in \edt for $\theta$ increasing from 0 to $2\pi$ is
  $O(n^2)$.
\item In the second part, we analyze the combinatorial changes to \edt
  with respect to $k$.  A combinatorial change to \edt corresponds to
  a quadruplet of pairs, each pair consisting of an element of $Q$ and
  an element of $P$ touching each other in some placement of a scaled
  copy of $P$ simultaneously.  To count the quadruplets inducing
  combinatorial changes to \edt, we consider the triplets of such
  pairs and define a function for each triplet implying the size of
  the scaled copy of $P$ defined by the triplet, satisfying the
  followings: For the lower envelope $L$ of the functions, a
  combinatorial change corresponds to an intersection of two such
  functions appearing on $L$.  That is, every combinatorial change to
  \edt occurs at a breakpoint on the lower envelope of the functions.
  So, the complexity of the lower envelope bounds the number of
  combinatorial changes that occur during the rotation of $P$.  We
  reduce the complexity bound on the lower envelope by classifying the
  combination of pairs for the quadruplets.
\end{enumerate}

While this high-level strategy may appear similar to the previous
one~\cite{chew1993convex}, there are a few major differences and
difficulties in improving the bound. In the first part, we improve the
previous bound $O(n\dsn{n}{3})$ by Chew and Kedem as follows. We
partition the family of functions to subfamilies such that the
functions in the same subfamily have the same domain length, and
therefore the complexity of their lower envelope becomes linear to the
number of functions~\cite{bae2020empty,lee2020largest}. Thus, the
total upper bound is improved to $O(n^2)$.

In the second part, instead of the quadruplets considered by Chew and
Kedem, we consider the triplets of pairs only and show that the
functions for the triplets, in their lower envelope, give us an upper
bound on the number of the combinatorial changes to \edt.  These
functions must reflect the placement of a scaled copy of $P$ that can
be inscribed in $Q$ as well as the scaling factor. We present a
function definition satisfying this requirement and show that every
combinatorial change to \edt occurs at a breakpoint on the lower
envelope of the functions. There are $O(k^3n^2)$ such functions and
two functions intersect at most four times. Thus, the complexity of
the lower envelope of the functions is $O(\lambda_6(k^3n^2))$ as the
lower envelope corresponds to a Davenport-Schinzel sequence of order
6.  To reduce the bound, we classify the functions into types based on
the combinations of pairs defining the functions and show that any two
functions belonging to the same type intersect each other less than
four times.  By applying the partition method in the first part and
the classification above on the functions, we show that the complexity
of the lower envelope becomes $O(k^2n^2\lambda_4(k))$.

Due to the limit of space, the proofs of some lemmas and corollaries
are given in Appendix. 

\section{Preliminary}
\label{sec:prelim}
\subsection{Davenport--Schinzel sequences and lower envelopes}
\label{sec:davenport}
A Davenport--Schinzel sequence is a sequence of symbols in which the
frequency of any two symbols appearing in alternation is limited.  We
call a Davenport--Schinzel sequence of order $s$ that includes $n$
distinct symbols a $\ds$-sequence, and denote by $\dsn{n}{s}$ the
length of the longest $\ds$-sequence.  For a $\ds$-sequence $U$, let
$U(j)$ denote the $j$-th entry of $U$.  We say a symbol $a_i$ of $U$
is \emph{active} at $j$ if $U(j')=a_i$ and $U(j'')=a_i$ for some
$j'\leq j$ and $j''>j$.  Let $\nu(j)$ denote the number of active
symbols of $U$ at $j$.  We say $U$ is a $\dsa$-sequence if
$\nu(j)\leq m$ for each $j$. We denote by $\dsan{n}{s}{m}$ the length
of the longest $\dsa$-sequence.

We use some properties and lemmas related to Davenport--Schinzel
sequences in analyzing algorithms.  Let
$\mathcal{F}=\{f_1,\dots,f_n\}$ be a collection of $n$
partially-defined, continuous, one-variable real-valued functions. The
points at which two functions intersect in their graphs and endpoints
of function graphs are called the \emph{breakpoints}.  If any two
functions $f_i$ and $f_j$ intersect in their graphs at most $s$ times,
the lower envelope of $\mathcal{F}$ has at most $\dsn{n}{s+2}$
breakpoints~\cite{atallah1983dynamic}.  We introduce some technical
lemmas that are used in Section~\ref{se:EDT.change}.

\begin{lemma}[Lemma 14 of~\cite{lee2020largest}]
  \label{lem:lee}
  Assume any two functions $f_i$ and $f_j$ intersect in their graphs
  at most once and each function $f_i$ has domain $D_i$ of length $d$.
  If there is a constant $c$ such that $|{\bigcup}{D_i}|=cd$, then the
  lower envelope of $\mathcal{F}$ has $O(n)$ breakpoints.
\end{lemma}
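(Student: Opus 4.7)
The plan is to exploit the structure imposed by the uniform domain length $d$ and the bounded total length $cd$ in order to improve the generic Davenport--Schinzel bound of order $3$ (which would give only $O(n\alpha(n))$) down to $O(n)$. The overall strategy is to localize the analysis to short sub-intervals on which all but a controlled number of functions are defined throughout, and then exploit the pairwise-single-intersection property in that localized setting.

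First, I would partition the union $U=\bigcup D_i$ into $2c$ consecutive equal-length sub-intervals $I_1,\ldots,I_{2c}$, each of length $d/2$. For each $I_j$, I classify the functions whose domain meets $I_j$ into \emph{full} functions (those with $I_j\subseteq D_i$, collected into a set $F_j$) and \emph{partial} functions (those with $D_i\cap I_j\neq\emptyset$ but $D_i\not\supseteq I_j$, collected into $P_j$). Because $|D_i|=d=2|I_j|$, any $D_i$ is full on at most two sub-intervals and partial on at most two, so $\sum_j|F_j|=O(n)$ and $\sum_j|P_j|=O(n)$, where $c$ is treated as a constant.

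Next, I would show that the complexity of the lower envelope of $\mathcal{F}$ restricted to $I_j$ is $O(|F_j|+|P_j|)$. For the full functions in $F_j$, all are defined throughout $I_j$ and pairwise intersect at most once; any $aba$ subsequence in their restricted lower envelope would force two of them to intersect at least twice in $I_j$, a contradiction. Hence the restricted envelope of $F_j$ on $I_j$ has at most $|F_j|$ pieces. For the partial functions, each of them covers only a one-sided sub-interval of $I_j$ adjacent to an endpoint of $I_j$; I would then argue that the breakpoints these functions create on the envelope can be charged to their $O(1)$ domain endpoints lying inside $I_j$, giving an additive $O(|P_j|)$ term. Summing the bound $O(|F_j|+|P_j|)$ over the $2c$ sub-intervals and using the counting estimates from the partition step yields the claimed $O(n)$ bound.

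The main obstacle will be the rigorous treatment of partial functions. A naive analysis suggests that a single partial function could oscillate many times against the envelope of $F_j$ within $I_j$, so the $O(|P_j|)$ contribution is not immediate from a per-function argument. I expect the one-sided structure of partial graphs together with a careful global charging argument that attributes each extra breakpoint to a unique domain endpoint (of which there are only $2n$ in $\mathcal{F}$) to be the crucial ingredient that keeps the cumulative contribution linear. Once this is established, combining with the linear contributions from full functions across all sub-intervals gives the desired $O(n)$ bound on the complexity of the lower envelope of $\mathcal{F}$.
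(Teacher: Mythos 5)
Your localization strategy is the right one, and the classification of functions on each short sub-interval into anchored (one-sided) and interior (full) families is close to what the paper does, but the proof has a genuine gap exactly where you flag it: combining the sub-envelopes within a single $I_j$. The lower envelope of the full functions $F_j$ can have $\Theta(|F_j|)$ pieces, and a single partial function $g$ may cross each of these pieces once (each crossing coming from a distinct pairwise intersection), so $g$ alone can contribute $\Theta(|F_j|)$ breakpoints to the combined envelope; none of these can be charged to the $O(1)$ domain endpoints of $g$, so the proposed endpoint-charging argument does not close the gap. A direct Davenport--Schinzel bound on the min of the three sub-envelopes (pieces pairwise intersecting once) only gives $O((|F_j|+|P_j|)\alpha(\cdot))$, which sums to $O(n\alpha(n))$, not $O(n)$.

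The paper sidesteps this in two ways you do not. First, it cuts at intervals of length exactly $d$ rather than $d/2$, so that every function whose domain meets $[l_{j-1},l_j]$ must contain $l_{j-1}$ or $l_j$; there are no interior ``full'' functions at all, only a left-anchored family $R(j-1)$ and a right-anchored family $L(j)$, each of which has a linear-size envelope by the common-endpoint/order-2 argument. Second---and this is the key device you are missing---it refines the partition by slicing $D_{\mathcal{F}}$ at \emph{every} breakpoint of the two auxiliary envelopes $U_L$ and $U_R$ (there are only $O(n)$ of them), and then observes that on each slice the lower envelope of $\mathcal{F}$ is the pointwise minimum of a single piece of $U_L$ and a single piece of $U_R$, hence has constant complexity. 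Summing over slices gives $O(n)$. Your setup can likely be repaired by applying the same re-slicing to the breakpoints of the three envelopes of $F_j$, the left-partials, and the right-partials, but as written that step is absent and the endpoint-charging idea in its place is not sound.
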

\begin{proof}
  Let $D_\mathcal{F}$ be the union of all ${D_i}$'s.
  Let $l_i$ be the point of $D_\mathcal{F}$ at distance $di$ from the leftmost
  point of $D_\mathcal{F}$, for $i=0,1,\ldots, \lceil c\rceil$. Then
  $l_j<l_{j'}$ for any two indices $0\leq j<j'\leq \lceil c\rceil$.
  Then these points $l_i$'s partition $D_\mathcal{F}$ into intervals such that
  each interval has length $d$, except the last one of length smaller
  than or equal to $d$.
  Observe that each $D_i$ intersects at most two consecutive
  intervals. 
  
  Let $L(j)=\{f_i|_{[l_{j-1},l_j]}\mid l_j\in{D_i}\}$ and
  $R(j)=\{f_i|_{[l_j,l_{j+1}]}\mid l_j\in{D_i}\}$ be the sets of
  functions defined on the intervals with $0< j < \lceil c\rceil$.
  Since any two functions in $L(j)$ intersect in their graphs at most
  once and their domains have $l_j$ as the right endpoint,
  the lower envelope of $L(j)$ forms a Davenport--Schinzel sequence of
  order 2, and therefore the lower envelope of $L(j)$ has complexity
  $O(|L(j)|)$.  Similarly, the lower envelope of $R(j)$ has complexity
  $O(|R(j)|)$.
  Since the union $U_L$ of all $L(j)$'s and the union $U_R$ of all
  $R(j)$'s have $O(n)$ functions, the lower envelope of each union has
  complexity $O(n)$.
    
  Now, consider a new partition of $D_\mathcal{F}$ obtained by slicing it at
  every breakpoint on the lower envelopes of $U_L$ and $U_R$.  Since
  there are $O(n)$ such breakpoints and the lower envelope of $\mathcal{F}$
  restricted to a component of the new partition has a constant
  complexity, the complexity of the lower envelope of $\mathcal{F}$ is $O(n)$.
\end{proof}

\begin{lemma}[Lemma 1 of~\cite{huttenlocher1992dynamic}]
  \label{lem:Huttenlocher}
  $\dsan{n}{s}{m}$ is $O(\frac{n}{m}\dsn{m}{s})$.
\end{lemma}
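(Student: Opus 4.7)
The plan is to take a longest $\dsa$-sequence $U$ (so that $|U| = \dsan{n}{s}{m}$), partition it into $\lceil n/m\rceil$ contiguous blocks each involving only $O(m)$ distinct symbols, apply the standard $\lambda_s$ bound to each block, and sum.

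First, I would enumerate the $n$ symbols $a_1, a_2, \ldots, a_n$ in order of their first appearance in $U$, group them into consecutive batches $G_1, G_2, \ldots$ of size $m$ (with the last batch possibly smaller), and let $p_i$ denote the position of the first appearance of the leading symbol of $G_i$. These positions cut $U$ into contiguous blocks $B_i = U[p_i, p_{i+1})$. To bound the alphabet of $B_i$, note that the $m$ symbols of $G_i$ all make their debut inside $B_i$, while any other symbol $a$ appearing in $B_i$ has its first appearance strictly before $p_i$ and an occurrence at some position $\geq p_i$; hence $a$ is active at position $p_i - 1$, and by the $m$-activity hypothesis at most $m$ such carried-over symbols exist. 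Thus $B_i$ involves at most $2m$ distinct symbols. Being a contiguous subsequence of the order-$s$ DS-sequence $U$, the block $B_i$ is itself an order-$s$ DS-sequence on at most $2m$ symbols, so $|B_i| \leq \dsn{2m}{s}$.

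Summing over the $\lceil n/m\rceil$ blocks gives $\dsan{n}{s}{m} = |U| \leq \lceil n/m\rceil \cdot \dsn{2m}{s}$. Invoking the standard fact that $\dsn{2m}{s} = O(\dsn{m}{s})$, which follows from the near-linear growth of $\dsn{\cdot}{s}$ in its first argument, yields the claimed bound $O((n/m)\,\dsn{m}{s})$. The only delicate point is the bookkeeping at block boundaries: one must verify that every symbol carried over into $B_i$ is genuinely active at $p_i - 1$ so that the $m$-activity hypothesis applies, and one must absorb the constant factor coming from ``$2m$'' back into ``$m$'' using the just-mentioned asymptotic property of $\lambda_s$.
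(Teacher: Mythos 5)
The paper does not give a proof of this lemma; it is cited directly from Huttenlocher, Kedem, and Sharir. So there is no in-paper argument for me to compare against, and I can only assess your proof on its own terms.

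Your proof is correct, and it is the standard block-decomposition argument for this kind of bound. The key points are all right: the $m$ symbols of $G_i$ all have their first appearance in $[p_i,p_{i+1})$ because first appearances were used to order the alphabet; any other symbol occurring in $B_i$ must have appeared before $p_i$ and again at or after $p_i$, so it is active at $p_i-1$ and the $m$-activity hypothesis caps the number of such carried-over symbols at $m$; a contiguous subsequence of an order-$s$ Davenport--Schinzel sequence is again an order-$s$ Davenport--Schinzel sequence (no new consecutive repetitions are created and any alternation in $B_i$ is one in $U$); and summing over $\lceil n/m\rceil$ blocks gives the claimed bound. One caveat worth flagging: the final step $\dsn{2m}{s}=O(\dsn{m}{s})$ is true but is not an elementary combinatorial identity. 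The easy structural facts about $\dsn{n}{s}$ --- superadditivity, and monotonicity of $\dsn{n}{s}/n$ --- both yield $\dsn{2m}{s}\geq 2\dsn{m}{s}$, i.e.\ an inequality in the wrong direction. The clean justification really does go through the nontrivial sharp bounds $\dsn{n}{s}=\Theta(n\beta_s(n))$ with $\beta_s$ slowly growing, so that $\beta_s(2m)=O(\beta_s(m))$. This is entirely legitimate to invoke, but you should be aware that it is the one place where the constant in your $O(\cdot)$ rests on deep theorems about Davenport--Schinzel sequences rather than on the elementary counting you set up.
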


\begin{corollary}
  \label{cor:low_env2}
  Let $\mathcal{G}=\{g_1,\dots,g_m\}$ be a collection of $m$
  partially-defined, piecewise continuous, one-variable real-valued
  functions.  Let $n$ be the total number of continuous pieces in the
  function graphs of $\mathcal{G}$. If any pair of continuous pieces intersect
  in at most $s$ points, the lower envelope of $\mathcal{G}$ has
  $O(\frac{n}{m}\dsn{m}{s+2})$ breakpoints.
\end{corollary}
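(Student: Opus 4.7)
The plan is to regard each individual continuous piece as its own symbol, read off the Davenport--Schinzel sequence $U$ induced by the lower envelope of $\mathcal{G}$ on these $n$ symbols, and then apply Lemma~\ref{lem:Huttenlocher} after observing that pieces belonging to the same $g_i$ cannot be simultaneously active. In effect, this upgrades the naive bound $\dsn{n}{s+2}$, which would treat all $n$ pieces as independent functions, to the stronger $O(\tfrac{n}{m}\dsn{m}{s+2})$ by exploiting the grouping of the pieces into $m$ functions.

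First, I would form the sequence $U$ by scanning the lower envelope of $\mathcal{G}$ from left to right along the $x$-axis and recording, on each maximal sub-interval, the piece that realizes the envelope; its length equals the number of breakpoints of the envelope. Since any two pieces meet in at most $s$ points, the classical argument for lower envelopes (already invoked in the paper just before this corollary) shows that $U$ is a Davenport--Schinzel sequence of order $s+2$ on $n$ distinct symbols.

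Next, I would bound the number of simultaneously active symbols in $U$. If the symbol corresponding to a piece $p$ is active at some position $j$ with $x$-coordinate $x_j$, then $p$ realizes the envelope at some $x<x_j$ and some $x>x_j$; since $p$ is a single continuous piece, its domain is a connected interval, which must therefore contain $x_j$. However, the continuous pieces decomposing the single-valued function $g_i$ have pairwise interior-disjoint domains, so at most one piece per $g_i$ is defined at $x_j$. Hence at most $m$ symbols of $U$ are active at $j$, so $U$ is a Davenport--Schinzel sequence on $n$ symbols of order $s+2$ with at most $m$ active at every position. Lemma~\ref{lem:Huttenlocher} then yields $|U| = O(\tfrac{n}{m}\dsn{m}{s+2})$, as claimed.

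The main obstacle is the activeness argument: one must carefully justify that two distinct continuous pieces of the same $g_i$ cannot both be defined at a common $x$-coordinate, which uses that the pieces are the components of the maximal continuity decomposition of the single-valued partial function $g_i$. Once this is in hand, the rest is a direct translation of envelope breakpoints into a restricted DS-sequence and an appeal to Lemma~\ref{lem:Huttenlocher}.
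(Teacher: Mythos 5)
Your proof is correct and takes essentially the same approach as the paper's: form the lower-envelope sequence over the $n$ individual continuous pieces, observe that it is a Davenport--Schinzel sequence of order $s+2$, bound the number of simultaneously active symbols by $m$ (at most one per $g_i$), and then apply Lemma~\ref{lem:Huttenlocher}. Your connectedness argument, showing that an active piece must actually be defined on the interval in question so that at most one piece of each $g_i$ can be active there, usefully spells out a step the paper states only tersely.
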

\begin{proof}
  Let $\mathcal{G'}=\{g'_1,\dots,g'_n\}$ be the set of all continuous
  pieces in $\mathcal{G}$.  Let $I_1,\dots,I_l$ be the
  interior-disjoint intervals induced by the breakpoints of lower
  envelope of $\mathcal{G'}$, such that $I_j$ lies to the left of
  $I_{j'}$ if $j<j'$ for any two indices $j$ and $j'$.  Note that the
  set of intervals $I_1,\dots,I_l$ is a partition of the domain of
  $\mathcal{G}$.  Then the lower envelope of $\mathcal{G}$ restricted
  to interval $I_j$ consists of exactly one continuous piece of
  $\mathcal{G'}$.  Let $u_j$ be the index of the continuous piece
  appearing in the lower envelope in $I_j$.  Then $U=<u_1,\dots,u_l>$
  is the sequence representing the lower envelope of $\mathcal{G}$.
  Since the number of active symbols at any position of the domain of
  $\mathcal{G}$ is at most one, $U$ is a $\mathrm{DS}(n,s+2,m)$-sequence.
  Therefore, the lower envelope of $\mathcal{G}$ has
  $O(\frac{n}{m}\dsn{m}{s+2})$ breakpoints by
  Lemma~\ref{lem:Huttenlocher}.
\end{proof}

\subsection{Edge Voronoi diagrams and edge Delaunay triangulations}
\label{sec:voronoi}
We introduce the \emph{edge Voronoi diagram} and its dual, the
\emph{edge Delaunay triangulation} (\edt), which are described
in~\cite{chew1993convex}.
The set $S$ of sites consists of the edges (open line segments) and
their endpoints in the polygonal domain $Q$ of size $n$.  For a convex
polygon $P$ with $k$ vertices, the $P$-distance from a point $p$ to a
point $q$ is $d_{P}(p,q)=\inf\{\mu\vert q\in{p+\mu P}\}$.  The Voronoi
diagram is a subdivision of the plane into regions such that the
points in the same region all have the same nearest site under the
$P$-distance. See Figure~\ref{fig:voronoi}(a).

\begin{figure}[ht]
  \begin{center}
    \includegraphics[width=.7\textwidth]{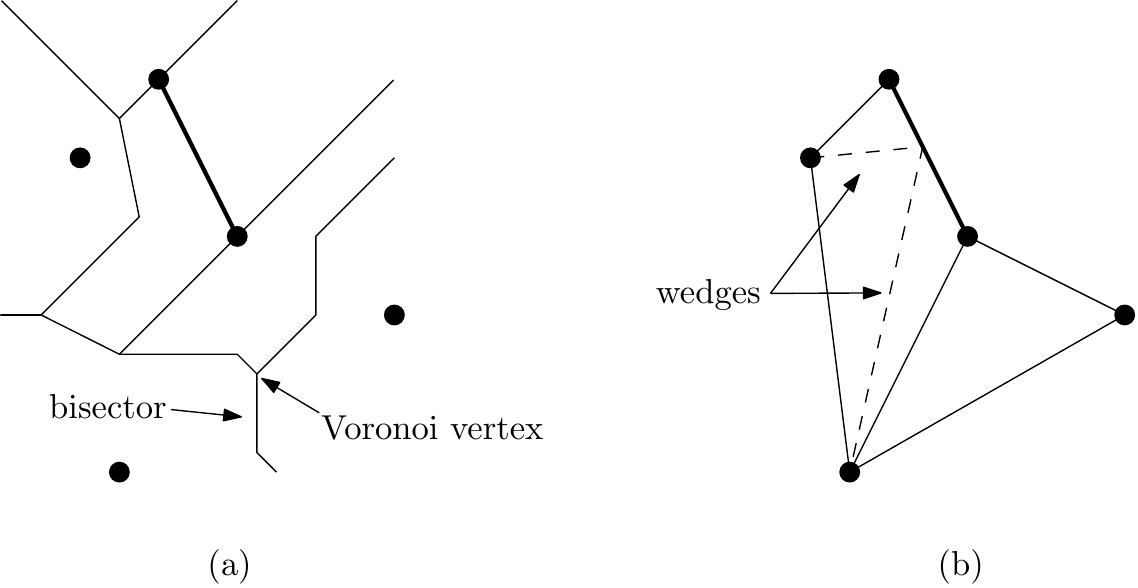}
  \end{center}
  \caption{(a) The edge Voronoi diagram of sites under $P$-distance
    when $P$ is an axis-aligned square.  (b) The edge Delaunay
    triangulation dual to the edge Voronoi diagram in (a).}
  \label{fig:voronoi}
\end{figure}

The edge Voronoi diagram consists of \emph{Voronoi vertices} and
\emph{Voronoi edges (bisectors)}.  A point in the plane is a Voronoi
vertex if and only if there is an empty circle defined by the
$P$-distance centered at the point and touching three or more sites.
It is known that the number of Voronoi vertices is linear to the
complexity of the sites~\cite{fortune1985fast}.
The bisector between two sites defines a Voronoi edge if and only if
there is an empty circle defined by the $P$-distance centered at a
point on the bisector and touching the sites.
A Voronoi edge is a polygonal line that connects two adjacent Voronoi
vertices and each point on the edge is equidistant from the two sites
defining the edge under the $P$-distance.
The edge Voronoi diagram can be built in $O(kn\log{kn})$ time and
$O(kn)$ space.

Just as the standard Delaunay triangulation is the dual of the
standard Voronoi diagram, the edge Delaunay triangulation (\edt) is
the dual of the edge Voronoi diagram.  It has three types of
\emph{generalized edges}: \emph{edges}, \emph{wedges}, and
\emph{ledges}.  An edge connects two point sites, a wedge connects a
point site and a segment site, and a ledge connects two segment
sites. See Figure~\ref{fig:voronoi}(b).  The edge Delaunay
triangulation is a planar graph consisting of point sites, segment
sites, generalized edges, and empty triangles.  Since a ledge is a
trapezoid or a degenerate trapezoid, \edt is actually not a true
triangulation in general.

The edge Delaunay triangulation can be constructed by first building
the edge Voronoi diagram and then tracing the diagram to determine the
sites that define each portion of the Voronoi edges and vertices. The
type of a generalized edge is determined by the sites defining the
corresponding Voronoi edge.

\begin{figure}[ht]
  \begin{center}
    \includegraphics[width=0.3\textwidth]{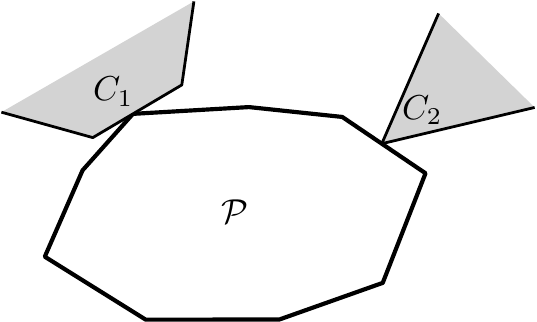}
  \end{center}
  \caption{$\homoP$ satisfies a side contact pair $C_1$ and a corner contact pair $C_2$.}
  \label{fig:fixed_angle}
\end{figure}

An ordered pair
$(A,B)$ is a \emph{side contact pair} if $A$ is a side of $Q$ and $B$
is a corner of $P$, and a \emph{corner contact pair} if $A$ is a
corner of $Q$ and $B$ is a side of $P$. A side contact pair or a corner contact
pair is called a \emph{contact pair}. We denote by $\homoP$ a homothetic copy of $P$.   We say $\homoP$
\emph{satisfies} a contact pair $(A,B)$ if $B$ in $\homoP$ touches
$A$.  See
Figure~\ref{fig:fixed_angle}. 
We say $\homoP$ is \emph{feasible} if $\homoP$ is inscribed in
$Q$.  Note that $\homoP$ is not necessarily feasible even if $\homoP$
satisfies a contact pair.

\subsection{The Algorithm of Chew and Kedem}
\label{sec:algo}
We introduce the algorithm by Chew and Kedem.  Imagine we rotate $P$
by angle $\theta$ in counterclockwise direction.  Let $P_\theta$ be
the rotated copy of $P$, and let $\homoPt$ be a homothetic copy of
$P_\theta$.  Let $\edtt$ denote the edge Delaunay triangulation of the
sites in $S$ with respect to the $P_\theta$-distance.  For a face $T$
of \edtt, we say a $\homoPt$ is associated to $T$ if it touches all
the sites defining $T$.  For $\homoPt$ associated to $T$, the set of
the elements (vertices or edges) of $\homoPt$ touching the sites
defining $T$ becomes the label of $T$. See Figure~\ref{fig:label}(a).
For a site $s$ of $S$, the label of $s$ is the set of elements of
$\homoPt$'s touching $s$, for $\homoPt$'s associated to the faces
incident to $s$. See Figure~\ref{fig:label}(b).

\begin{figure}[ht]
  \begin{center}
    \includegraphics[width=.7\textwidth]{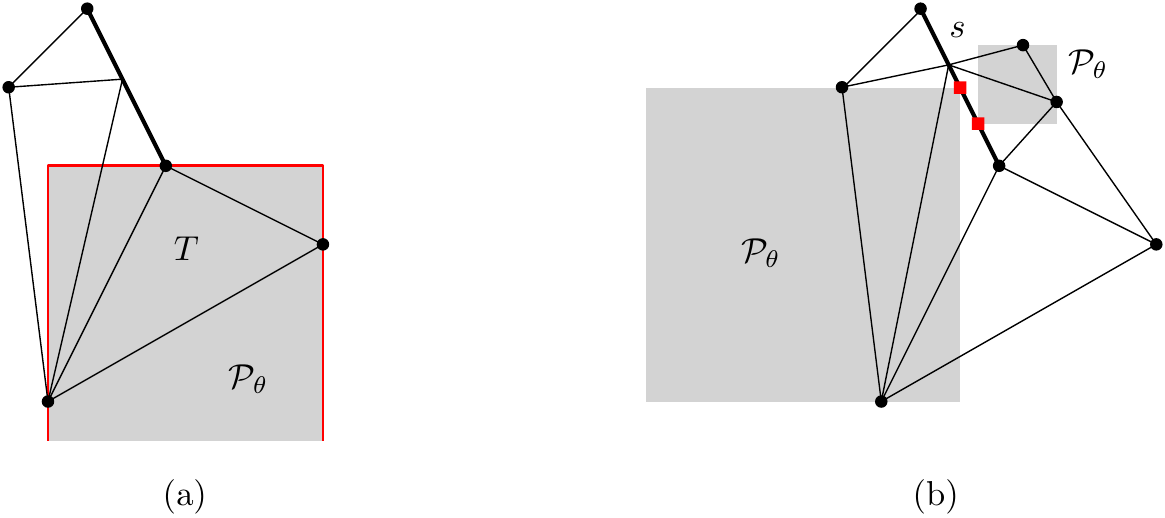}
  \end{center}
  \caption{Labels of faces of $\edtt$ and edge sites of $S$  
  for an axis-aligned square $P_\theta$. (a) The label of face $T$ is the set of red edges of
    $\homoPt$. (b) The label of edge site $s$ is the set of red
    corners of $\homoPt$'s.}
  \label{fig:label}
\end{figure}

Their algorithm classifies two possible types of changes, an edge
change and a label change in $\edtt$ while $\theta$ increases.  In an
edge change, a new generalized edge appears or an existing edge
disappears. This change occurs when $\homoPt$ touches four elements of
$Q$, resulting in a flip of the diagonals in the quadrilateral formed
by the four edges of $\edtt$.  In a label change, the label of a face
in $\edtt$ changes.  This occurs when two or more elements of
$\homoPt$ touch the same site, but the structure of $\edtt$ may be
unchanged.
An edge change or a label change is called a \emph{combinatorial
  change} to $\edtt$.

Their algorithm constructs and maintains a representation for $\edtt$
while $\theta$ increases.  It starts by creating $\edtt$ at
$\theta=0$.  For each generalized edge in $\edtt$, it determines at
which orientation this edge ceases to be valid due to an interaction
with its neighbors.  For each face in $\edtt$, the algorithm
determines at which orientation the label of this face changes.  An
edge change is detected by checking the edges of $\edtt$ and a label
change is detected by checking the faces of $\edtt$.
The algorithm maintains the edges and faces of $\edtt$ in a priority
queue, ordered by the orientations at which they are changed.  At each
succeeding stage of the algorithm, it determines which generalized
edge is the next one to disappear or which face has its label
changed as $\theta$ increases.  Then, it updates $\edtt$ and the
priority queue information for the new edge and its neighbors.  Note
that a new edge may change the priority for its neighbors. A priority
queue can be implemented such that each operation can be done
$O(\log{m})$ time, where $m$ is the maximum number of items in the
queue.  Since there are never more than $O(n)$ edges and faces in the
queue at any one time, each priority queue operation takes time
$O(\log{n})$.

For each event of a face $T$ disappearing at $\theta_e$, the algorithm
finds the maximal interval $\mathcal{I}=[\theta_s,\theta_e]$ of
$\theta$ such that $T$ appears on $\edtt$.  To find $\mathcal{I}$ in
$O(1)$ time, it stores at $T$ the orientation at which it starts to
appear on $\edtt$.  Then it computes the orientation
$\theta^* \in \mathcal{I}$ that maximize the area of each $\homoPt$
that satisfies the contacts induced by $T$.  Since $\mathcal{I}$ is
the maximal interval, $\homoPt$ is feasible for every
$\theta \in \mathcal{I}$ but not for any $\theta \notin \mathcal{I}$
sufficiently close to $\mathcal{I}$.  Thus, the algorithm considers
all orientations that $\homoPt$ is feasible and computes the placement
and orientation of the largest similar copy of $P$.  Note that the
area function of $\homoPt$ can be computed in $O(1)$ time and the
number of $\homoPt$ that satisfies the contacts induced by $T$ is
$O(1)$.  Chew and Kedem gave an upper bound $O(k^4n\dsn{n}{3})$ on the
combinatorial changes, and their algorithm takes
$O(k^4n\dsn{n}{3}\log{n})$ time~\cite{chew1993convex}.

\section{The number of changes in \texorpdfstring{$\edtt$}{eDTtheta}}
\label{se:EDT.change}
We show that the number of combinatorial changes during the rotation
is $O(k^2n^2\dsn{k}{4})$ in this section. 
This directly improves the time bound for the algorithm to 
$O(k^2n^2\dsn{k}{4}\log{n})$.
We analyze the number of combinatorial changes in $\edtt$ for a
constant $k$ in Section~\ref{sec:phase1}, and then analyze the number
of combinatorial changes with respect to $k$ in
Section~\ref{sec:phase2} using the observation in
Section~\ref{sec:phase1}.  We use the reference point $(x,y)$, the
reference orientation $\theta$, and the expansion factor $\delta$ of
$\homoPt$ defined in~\cite{chew1993convex}, which represent a
placement of $\homoPt$ in the plane by a quadruplet
$(x,y,\theta,\delta)$.

\subsection{The number of changes for fixed \texorpdfstring{$k$}{k}}
\label{sec:phase1}

\begin{figure}[ht]
  \begin{center}
    \includegraphics[width=.8\textwidth]{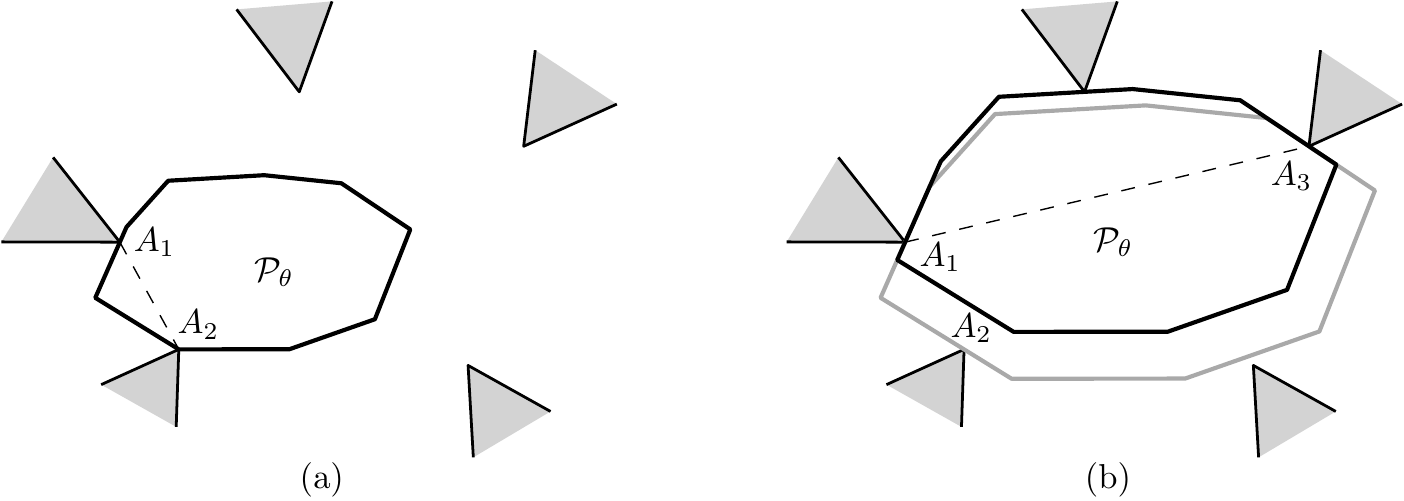}
  \end{center}
  \caption{(a) The segment $A_1A_2$ is a reported edge with $Q_H=A_2$.
    (b) The segment $A_1A_3$ is an unreported edge because no hinge is
    involved in the segment for a feasible $\homoPt$.  }
  \label{fig:reported_edge}
\end{figure}
For a constant $k$, we improve the previously best upper bound
$O(n\dsn{n}{3})$ by Chew and Kedem~\cite{chew1993convex} to
$O(n^2)$. A key idea is to consider the lower envelope of some
functions related to the expansion factor, one for each edge and
vertex of $P$, and then to analyze the lower envelope of those lower
envelopes.  By careful analysis on the complexities of the lower
envelopes, we show that the number of combinatorial changes to $\edtt$
for $\theta$ increasing from 0 to $2\pi$ is $O(n^2)$.

An ordered pair $(Q_H,P_H)$ is a \emph{hinge} if $Q_H$ is a corner of
$Q$ and $P_H$ is a corner of $P$.  For a hinge $H=(Q_H,P_H)$ and a
contact pair $C=(A,B)$, the generalized edge connecting $Q_H$ and $A$ 
is a \emph{reported edge}
if there is a feasible $\homoPt$ for some $\theta$ satisfying both $H$
and $C$.  An edge of $\edtt$ is an \emph{unreported edge} if it is not
a reported edge.  See Figure~\ref{fig:reported_edge}.

\myparagraph{Changes to the reported edges and the label changes to point sites.}
We count the changes to the reported edges and the changes to the labels 
of point sites in $\edtt$ for $\theta$ increasing from $0$ to $2\pi$.
We define the \emph{expansion function} $\ex[HC]{\theta}$ for a hinge
$H$ and a contact $C$ to be the minimal expansion factor of $\homoPt$
satisfying $H$ and $C$.  For a hinge $H=(Q_H, P_H)$, let $\env_H$ be
the set of all expansion functions satisfying $H$ and another contact
pair.  An expansion function $\ex[HC]{\theta}$ of $\env_H$ for a
contact $C=(A,B)$ appears on the lower envelope of $\env_H$ at
$\theta$ if the generalized edge connecting $Q_H$ and $A$
is a reported edge in $\edtt$.  Then the number of
changes to the reported edges in $\edtt$ which involve $H$ is bounded
by the number of breakpoints on the lower envelope of $\env_H$.

Every label change to a point site involves a hinge.  See
Figure~\ref{fig:change_label}(a).  An intersection of $\exf{HC_1}$ and
$\exf{HC_2}$ of $\env_H$ for contact pairs $C_1$ and $C_2$ appears on the
lower envelope of $\env_H$ if a label change to a point site is
induced by $C_1, C_2$ and $H$.  Then the number of label changes
to the point sites in $\edtt$ which involve $H$ is bounded by the
number of breakpoints on the lower envelope of $\env_H$.
\begin{proposition}[Proposition 3 of~\cite{chew1993convex}]
  \label{prop:intersection}
  Two expansion functions $E_{HC_1}$ and $E_{HC_2}$ intersect each
  other in at most one point in their graphs if both $C_1$ and $C_2$
  are corner contact pairs, or both are side contact pairs. If one is
  a corner contact pair and the other is a side contact pair,
  $E_{HC_1}$ and $E_{HC_2}$ intersect in at most two points in their
  graphs.
\end{proposition}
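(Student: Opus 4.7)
The plan is to derive a closed-form expression for $\ex[HC]{\theta}$ under each of the two contact types, count solutions of $\ex[HC_1]{\theta}=\ex[HC_2]{\theta}$ algebraically, and then sharpen the count using positivity of the scaling factor $\delta$. Fix the hinge $H=(Q_H,P_H)$ and let $v_H$ denote the position of $P_H$ relative to the reference point of $P$. Any $\homoPt$ satisfying $H$ has its reference point pinned at $Q_H-\delta R_\theta v_H$, where $R_\theta$ is the rotation by $\theta$; a corner $v$ of $P$ then lies at $Q_H+\delta R_\theta(v-v_H)$ in $\homoPt$, and a side direction $u$ of $P$ becomes $R_\theta u$.

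For a corner contact $C=(q,s)$ with $s$ a side of $P$ through $w$ of direction $u$, requiring $q$ to lie on the line of $s$ in $\homoPt$ gives the scalar equation $(q-Q_H)\cdot R_\theta u^\perp=\delta\,(w-v_H)\cdot u^\perp$, from which $\ex[HC]{\theta}=A\cos\theta+B\sin\theta$ with constants $A,B$ depending only on $H$ and $C$. For a side contact $C=(s_A,v)$ with $s_A$ a side of $Q$ through $w_A$ and normal $u_A^\perp$, the corresponding requirement at the corner $v$ yields $\ex[HC]{\theta}=C'/(A'\cos\theta+B'\sin\theta)$ for constants $A',B',C'$.

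I would then count solutions of $\ex[HC_1]{\theta}=\ex[HC_2]{\theta}$ case by case. The two homogeneous cases reduce, directly or after cross-multiplying, to a single equation $\alpha\cos\theta+\beta\sin\theta=0$ with exactly two zeros in $[0,2\pi)$ that differ by $\pi$. The mixed case $(A_1\cos\theta+B_1\sin\theta)(A'_2\cos\theta+B'_2\sin\theta)=C_2$ expands via a product-to-sum identity to $\alpha\cos 2\theta+\beta\sin 2\theta=\gamma$, which has at most two solutions for $2\theta\in[0,2\pi)$ and hence at most four solutions for $\theta\in[0,2\pi)$, organized as two pairs differing by $\pi$.

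The main obstacle is trimming these algebraic counts in half to match the statement's bounds. Both the linear form $A\cos\theta+B\sin\theta$ and the denominator $A'\cos\theta+B'\sin\theta$ flip sign under $\theta\mapsto\theta+\pi$, so within each $\pi$-pair of algebraic solutions at most one satisfies $\ex[HC_1]{\theta}>0$ and $\ex[HC_2]{\theta}>0$, i.e., lies in the common domain on which both expansion functions are actually defined. Hence at most one intersection survives in each homogeneous case and at most two in the mixed case, matching the proposition. The main subtlety is this sign bookkeeping, together with handling degenerate configurations where the relevant sinusoid coefficients vanish simultaneously, which corresponds to geometrically degenerate choices of $H$ and $C$.
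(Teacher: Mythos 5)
The paper cites this proposition from Chew and Kedem without reproducing its proof, so there is no in-paper argument to compare against; I can only assess your derivation on its own terms. It is correct and is the natural route: under a hinge $H=(Q_H,P_H)$ the reference point of $\homoPt$ is pinned at $Q_H-\delta R_\theta v_H$, so solving the line-incidence constraint for the scale factor yields $\ex[HC]{\theta}=A\cos\theta+B\sin\theta$ for a corner contact pair and $\ex[HC]{\theta}=C'/(A'\cos\theta+B'\sin\theta)$ for a side contact pair, both valid on the subinterval where the contact is realized. Equating two functions of the same type reduces, directly or after cross-multiplication, to one equation $\alpha\cos\theta+\beta\sin\theta=0$ with two roots a half-turn apart, while the mixed case reduces to $\alpha\cos 2\theta+\beta\sin 2\theta=\gamma$ with up to four roots forming two antipodal pairs; since every expansion function in play flips sign under $\theta\mapsto\theta+\pi$, at most one root per antipodal pair has positive scale factor, which yields the bounds one and two exactly as stated.

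Two points are worth stating explicitly to close the argument. First, the true domain of $\ex[HC]{\theta}$ is cut down further by requiring the contact to lie on the relevant open segment rather than merely on its supporting line; this only shrinks the set of genuine graph intersections, so your upper bounds stand, but it should be said. Second, the nondegeneracy assumptions you allude to are not merely cosmetic: the constants $(w-v_H)\cdot u^{\perp}$ and $(Q_H-w_A)\cdot u_A^{\perp}$ you divide by are nonzero exactly when $P_H$ is not an endpoint of the side element of $C$ and $Q_H$ is not an endpoint of the edge element of $C$, and these are precisely the cases in which $C$ and $H$ define two distinct expansion functions rather than a degenerate or identically coinciding pair. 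With those caveats spelled out, the proof is complete and, as far as I can tell, matches the algebraic computation Chew and Kedem would have used.
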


Let $v_i$ and $e_i$ denote the vertices and edges of $P$ for $i=1,\dots,k$.
For each $i=1\dots k$, let
$\mathcal{C}_{1i}=\{(e,v_i)\mid \text{$e$ is an edge of $Q$}\}$ be
the set of side contact pairs and let
$\mathcal{C}_{2i}=\{(v,e_i)\mid \text{$v$ is a vertex of $Q$}\}$ be
the set of corner contact pairs.
Let $\env_{ji}=\{\exf{HC}\mid C\in\mathcal{C}_{ji}\}$ for $j=1,2$.

\begin{lemma}
  \label{lem:breakpoints}
  The number of breakpoints on the lower envelope of $\env_{ji}$ 
  is $O(n)$ for each $j=1,2$ and $i=1,\dots,k$.
\end{lemma}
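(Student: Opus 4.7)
The plan is to handle $\env_{1i}$ and $\env_{2i}$ (for a fixed hinge $H=(Q_H,P_H)$) by two complementary arguments that exploit the different geometric roles of the fixed element of $P$ in each case.

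For $\env_{2i}$, the approach is to show that every expansion function has the same domain length and then invoke Lemma~\ref{lem:lee}. I first rewrite the constraint ``$v$ lies on segment $e_i$ of $\homoPt$'' as $v - Q_H = \delta R_\theta \vec{w}(t)$ with $\vec{w}(t) = (1-t)(u_1 - P_H) + t(u_2 - P_H)$ for $t \in [0,1]$, where $u_1, u_2$ are the endpoints of $e_i$ in the reference $P$. Separating this equation into angular and radial components yields the expansion factor as a cosine in $\theta$, and the admissible $\theta$-range as the arc for which $\arg(v - Q_H) - \theta$ lies in $[\arg(u_1 - P_H), \arg(u_2 - P_H)]$. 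This arc has length $w_{e_i} := |\arg(u_2 - P_H) - \arg(u_1 - P_H)|$, which depends only on $e_i$ and $P_H$ and hence is the same for every $v$. The union of all these arcs lies in $[0, 2\pi)$, so $|\bigcup D_C| \leq 2\pi = (2\pi/w_{e_i}) \cdot w_{e_i}$. By Proposition~\ref{prop:intersection}, any two functions in $\env_{2i}$ intersect at most once. Lemma~\ref{lem:lee} then yields $O(n)$ breakpoints.

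For $\env_{1i}$, the analogous wedge angle at $Q_H$ is the angular width of the edge $e$ of $Q$ as seen from $Q_H$, which varies with $e$, so Lemma~\ref{lem:lee} is not directly applicable. Instead, I will observe that for $C = (e, v_i)$, solving the side-contact condition gives $\exf{HC}(\theta) = d_e(\theta + \alpha_i)/|v_i - P_H|$, where $\alpha_i = \arg(v_i - P_H)$ and $d_e(\phi)$ is the distance from $Q_H$ to the intersection with $e$ of the ray at direction $\phi$. Taking the pointwise minimum over $C$ then gives $\min_C \exf{HC}(\theta) = r(\theta + \alpha_i)/|v_i - P_H|$, where $r(\phi)$ is the distance from $Q_H$ to the first edge of $Q$ met by the ray in direction $\phi$. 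This $r$ is the radial distance function of the visibility polygon of $Q_H$ in the polygonal domain $Q$, which has $O(n)$ complexity. Hence the lower envelope of $\env_{1i}$ has $O(n)$ breakpoints as well.

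The main obstacle is the asymmetry between the two cases: for $\env_{2i}$, rotating the fixed edge $e_i$ preserves the wedge width at $Q_H$ (rotations preserve angles between vectors), so every function has the same domain length and Lemma~\ref{lem:lee} applies cleanly; for $\env_{1i}$ no such uniform structure is available, and we must fall back on the classical $O(n)$ complexity of the visibility polygon of a point in an $n$-vertex polygonal domain. A subtle verification point in the $\env_{1i}$ argument is to confirm that the lower envelope really coincides with the visibility distance function (and not some higher-level envelope), which amounts to checking that the segment-hitting condition defining the domain of each $f_e$ matches the ``ray first meets $e$'' condition implicit in the visibility computation.
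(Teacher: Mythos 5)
Your argument for $\env_{2i}$ matches the paper's proof: show that the domain length of each $\exf{HC}$ is $|\arg(u_2 - P_H) - \arg(u_1 - P_H)|$, independent of the $Q$-vertex $v$, note that any two such functions intersect at most once by Proposition~\ref{prop:intersection}, and invoke Lemma~\ref{lem:lee}. For $\env_{1i}$ you take a genuinely different route. The paper argues directly that for $C_1, C_2 \in \mathcal{C}_{1i}$ the graphs of $\exf{HC_1}$ and $\exf{HC_2}$ can only meet at endpoints of their domain intervals: equal values at some $\theta$ means the ray from $Q_H$ in direction $\theta+\alpha_i$ hits the two edges of $Q$ at the same distance, hence at the same point, and since obstacle edges are interior-disjoint that common point is a shared endpoint --- which is exactly where a domain interval terminates. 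With the pointwise minimum able to change only at the $O(n)$ interval endpoints, the $O(n)$ breakpoint bound follows immediately, with no appeal to Lemma~\ref{lem:lee}. You instead identify the lower envelope of $\env_{1i}$, after the reparametrization $\theta \mapsto \theta + \alpha_i$, as the radial distance function of the visibility polygon of $Q_H$ in the domain $Q$, and cite the classical $O(n)$ bound on its complexity. Both arguments are correct and ultimately rest on the same non-crossing property of the edges of $Q$; the paper's is more self-contained and avoids importing the visibility-polygon theorem, while yours has the merit of naming the envelope as a familiar geometric object. The verification you flag at the end --- that the pointwise minimum of the per-edge distance functions is indeed the ``first-hit'' visibility distance --- is a fair thing to note, but it is immediate from the definitions once one observes that the domain of $\exf{HC}$ for $C=(e,v_i)$ is precisely the arc of directions in which the ray from $Q_H$ meets $e$ at all.
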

\begin{proof}
  The number of breakpoints on the lower envelope of $\env_{1i} $ is
  $O(n)$ for each $i$ since $\exf{HC_1}$ and $\exf{HC_2}$ intersect
  only at the boundaries of their intervals for
  $C_1,C_2\in\mathcal{C}_{1i}$.

  Consider now the number of breakpoints on the lower envelope of
  $\env_{2i}$. Two expansion functions $\exf{HC_1}$ and $\exf{HC_2}$
  intersect in at most one point for
  $C_1,C_2\in\mathcal{C}_{2i}$. Also, $\exf{HC}$ has the same length
  of domain for all $C\in\mathcal{C}_{2i}$.  Thus, the number of
  breakpoints on the lower envelope of $\env_{2i}$ is $O(n)$ by
  Lemma~\ref{lem:lee}.
\end{proof}

From Corollary~\ref{cor:low_env2}, Proposition~\ref{prop:intersection}, and Lemma~\ref{lem:breakpoints}, we achieve an upper bound on the
number of breakpoints on the lower envelope of $\env_H$.
\begin{lemma}
  \label{lem:low_env_hinged}
  The number of breakpoints on the lower envelope of $\env_H$ is
  $O(\dsn{k}{3}n)$.
\end{lemma}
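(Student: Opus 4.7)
The idea is to apply Corollary~\ref{cor:low_env2} in two stages, separating the family by contact type so that the within-type intersection bound of Proposition~\ref{prop:intersection} ($s=1$) can be exploited before paying for the weaker mixed-type bound ($s=2$). Set $\env_H^1 := \bigcup_{i=1}^{k}\env_{1i}$ and $\env_H^2 := \bigcup_{i=1}^{k}\env_{2i}$, so that $\env_H = \env_H^1 \cup \env_H^2$.

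First, I would treat the $k$ lower envelopes of $\env_{11},\ldots,\env_{1k}$ as $k$ piecewise-continuous functions. Each of them has $O(n)$ breakpoints by Lemma~\ref{lem:breakpoints}, hence $O(n)$ continuous pieces, giving $O(kn)$ pieces in total. Proposition~\ref{prop:intersection} guarantees that any two side-contact expansion functions intersect at most once, so any pair of these continuous pieces intersects at most once as well (pieces within a common sub-envelope only share endpoints). Applying Corollary~\ref{cor:low_env2} with $m = k$, $n = O(kn)$, and $s = 1$, the lower envelope of $\env_H^1$ has $O\!\bigl(\tfrac{kn}{k}\dsn{k}{3}\bigr) = O(n\dsn{k}{3})$ breakpoints. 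The identical argument applied to $\env_H^2$ yields an envelope with $O(n\dsn{k}{3})$ breakpoints.

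To merge, I would view these two sub-envelopes themselves as two piecewise-continuous functions with $O(n\dsn{k}{3})$ pieces in total and invoke Corollary~\ref{cor:low_env2} a second time. A piece of the first envelope lies on some side-contact expansion function and a piece of the second on some corner-contact expansion function, so by Proposition~\ref{prop:intersection} any such pair of pieces intersects at most twice; pieces within the same sub-envelope still only share endpoints. Taking $m = 2$ and $s = 2$, and using that $\dsn{2}{4}$ is a constant, gives $O(n\dsn{k}{3})$ breakpoints on the lower envelope of $\env_H$.

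The only subtle point is choosing this two-stage granularity. Viewing $\env_H$ in one shot as a single collection of $2k$ sub-families would force the uniform intersection bound $s = 2$ (to handle mixed-type piece pairs) and produce only $O(n\dsn{k}{4})$. Splitting by type first, where the within-type bound $s = 1$ applies, and paying the mixed-type cost only once at the final merge, is exactly what sharpens the bound to $O(n\dsn{k}{3})$.
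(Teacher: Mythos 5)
Your proof is correct and follows essentially the same two-stage strategy as the paper: build per-element-of-$P$ lower envelopes via Lemma~\ref{lem:breakpoints}, merge within each contact type using Corollary~\ref{cor:low_env2} with $s=1$ to get $O(n\dsn{k}{3})$ per type, then take the envelope of the two resulting piecewise functions paying $s=2$ only once. The paper phrases the first merge as $\mathcal{L}_j$ being the lower envelope of the family $\env_j=\{f_{j1},\dots,f_{jk}\}$ and the final step as the envelope of $\mathcal{L}_1$ and $\mathcal{L}_2$, but the decomposition, the invocations of Corollary~\ref{cor:low_env2} and Proposition~\ref{prop:intersection}, and the arithmetic are identical to yours.
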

\begin{proof}
  Let $\env_{j}=\{f_{j1},\dots,f_{jk}\}$ for $j=1,2$, where
  $f_{ji}$ is the lower envelope of $\env_{ji}$ for each
  $i=1,\dots,k$ and $j=1,2$. Let $\mathcal{L}_j$ denote the lower
  envelope of $\env_{j}$.  Then, the lower envelope of $\env_H$ is the
  lower envelope of $\mathcal{L}_{1}$ and $\mathcal{L}_{2}$.  The
  number of breakpoints on $\mathcal{L}_{j}$ is $O(\dsn{k}{3}n)$ for
  $j=1,2$ by Corollary~\ref{cor:low_env2}, Proposition~\ref{prop:intersection},
  and Lemma~\ref{lem:breakpoints}.  The number of breakpoints on
  the lower envelope of $\env_H$ is also $O(\dsn{k}{3}n)$, because two
  continuous pieces, one from $\mathcal{L}_{1}$ and one from
  $\mathcal{L}_{2}$, intersect at most two points by
  Proposition~\ref{prop:intersection}.
\end{proof}

By Lemma~\ref{lem:low_env_hinged}, the number of changes to the
reported edges and the number of label changes to the point sites are
$O(k\dsn{k}{3}n^2)$.
\begin{figure}[ht]
  \begin{center}
    \includegraphics[width=0.8\textwidth]{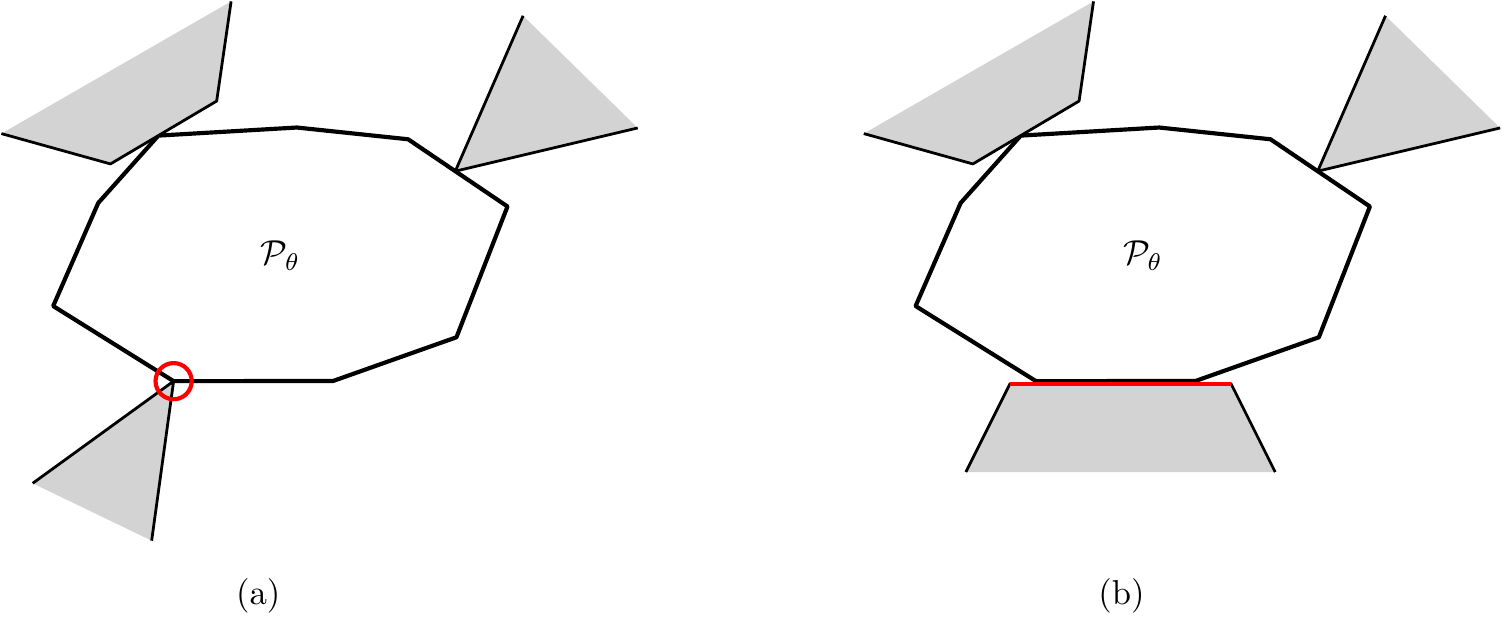}
  \end{center}
  \caption{The label changes to the red sites.  (a) Label change to a
    point site (hinge). (b) Label change to an edge site.}
  \label{fig:change_label}
\end{figure}

\myparagraph{Label changes to edge sites.}
We count the changes to the labels of edge sites in $\edtt$ for
$\theta$ increasing from 0 to $2\pi$.
Imagine we fix an edge $e$ of $Q$ and an edge $g$ of $P$. See
Figure~\ref{fig:change_label}(b).  Then, the number of label changes
to edge site $e$ with $g$ is $O(n)$ because the edge site and the edge
of $P$ are aligned.
The number of label changes to all edge sites is $O(kn^2)$.

\myparagraph{Changes to unreported edges.}
We count the changes to unreported edges using the numbers of changes
to the reported edges and to the labels, and Lemma~\ref{lem:reported_edge}.  
\begin{lemma}[Lemma 2 of~\cite{chew1993convex}]
  \label{lem:reported_edge}
  Every edge of $\edtt$ is either a reported edge or a diagonal in a
  convex $l$-gon, $l\leq3k$, whose sides are either reported edges or
  portions of edge sites.
\end{lemma}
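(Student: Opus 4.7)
The plan is to fix an unreported edge $e$ of $\edtt$ connecting sites $s_1, s_2$ and exhibit an explicit convex $l$-gon, $l \le 3k$, whose boundary consists of reported edges and portions of edge sites of $Q$ and of which $e$ is a diagonal. First I would exploit what ``unreported'' means: because $e$ is unreported, every feasible $\homoPt$ touching $s_1$ and $s_2$ realises each of these two contacts smoothly---a corner of $P$ meeting the relative interior of an edge of $Q$, or a corner of $Q$ meeting the relative interior of an edge of $P$---for otherwise a hinge at $s_1$ or $s_2$ together with the remaining contact would already make $e$ reported. I would then study the region $R$ swept by the reference points $(x,y)$ of all feasible $\homoPt$ that touch $s_1$ and $s_2$ (with $\theta$ and the scaling allowed to vary).

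The boundary $\partial R$ is traced by maximal such homothets, that is, by those that additionally touch some third site $s_3$. If this third contact is a hinge, the generalised edge from $s_3$ to $s_1$ or $s_2$ is reported by definition and contributes a reported-edge side; otherwise the contact is smooth and, in $(x,y)$-space, the corresponding boundary piece lies along a portion of the edge site realising the contact. To bound the number of such pieces, I would index each smooth piece by the pair (element of $P$ in contact with $s_3$, element of $Q$ contacted), and observe that as $\theta$ and the scaling change continuously along $\partial R$, the contacting element of $P$ proceeds monotonically around $\partial P$. Since $P$ has $k$ corners and $k$ edges, each element can appear in at most a bounded number of pieces; charging smooth pieces together with the hinge transitions between them yields $l \le 3k$. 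Finally, $e$ joins $s_1, s_2 \in \partial R$ through the interior of $R$, hence it is a diagonal of the $l$-gon.

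The main obstacle I expect is establishing convexity of $R$ together with the precise $3k$ bound. Convexity should follow from the convexity of $P$: along each smooth piece of $\partial R$ the outward normal coincides (up to a fixed affine change induced by the contacts at $s_1, s_2$) with the outward normal at the contacting element of $P$, and these normals rotate monotonically with the cyclic order of elements along $\partial P$, so $\partial R$ cannot bend inward. The sharp $3k$ count would then come from a charging scheme attributing at most three sides of the polygon to each element of $P$, accounting for the interleaving of smooth edge-site pieces with the reported-edge transitions at hinges between them.
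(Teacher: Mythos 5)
The paper does not prove this lemma: it is imported verbatim as ``Lemma~2 of~\cite{chew1993convex}'' and used as a black box, so there is no internal proof to compare you against. Evaluating your sketch on its own merits, I see a genuine gap at the level of object identification. The convex $l$-gon in the statement is a region in the ambient plane (a face of the graph $G_\theta$ defined immediately after the lemma), whose sides are generalized edges of \edtt\ and portions of edges of $Q$, and whose vertices are sites; the unreported edge is a diagonal of that face joining two of its vertices. Your region $R$ is instead the locus of \emph{reference points} of the feasible homothets $\homoPt$ that touch $s_1$ and $s_2$. These are not the same object, and your proof never connects them. In particular, the final step ``$e$ joins $s_1, s_2 \in \partial R$ through the interior of $R$'' does not type-check: $s_1$ and $s_2$ are sites of $Q$, not reference points of homothets, so they need not lie on $\partial R$ at all. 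Likewise, the arcs of $\partial R$ are traced by reference points of placements tangent to a third site, and those arcs are not reported edges of \edtt\ or portions of edge sites of $Q$; some duality would be required to identify them as such, and you do not supply it.

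A second issue is that you let $\theta$ vary when defining $R$, but the lemma is a statement about $\edtt$ for a single fixed $\theta$ (the reported edges, the portions of edge sites, and the $l$-gon are all per-$\theta$ objects used to build $G_\theta$). Finally, the $3k$ bound and the monotone progression of the contacting element of $P$ along $\partial R$ are asserted but not argued; the claimed charging scheme (``at most three sides per element of $P$'') would need to be spelled out, since it is precisely the quantitative content of the lemma. To make progress you would need to (i) fix $\theta$, (ii) work directly with the face of $G_\theta$ that has the unreported edge $e$ as a chord, (iii) relate the sides of that face to the contacts made by the empty homothets witnessing the generalized edges on its boundary, and (iv) derive $l\le 3k$ from the fact that $P$ has only $k$ corners and $k$ edges, each contributing boundedly many contacts as one traverses the face boundary.
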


Let $G_\theta$ be the graph whose
edges are the reported edges in $\edtt$ and portions of edge sites in
Lemma~\ref{lem:reported_edge}. We count the changes to the
unreported edges which are diagonals in a face of $G_\theta$ for an
interval of $\theta$ with no label change to $\edtt$.  Observe that no
combinatorial change occurs to $G_\theta$ for the interval.  Any
change to an unreported edge involves four sites lying on a face boundary of
$G_\theta$.  There are at most four changes for a group of four
sites. We describe the details on this bound in
Section~\ref{sec:intersection}.  Since each face has at most $3k$
edges by Lemma~\ref{lem:reported_edge}, there are at most
$\binom{3k}{4}$ such groups.  Since $O(k^4)$ changes occur to the
unreported edges for the boundary of a face $g$ of $G_\theta$ during an interval of
$\theta$ with no label change to the faces of $\edtt$ intersecting $g$,
there are $O(k^5n^2\lambda_3(k))$ combinatorial changes to $\edtt$.
\begin{theorem} \label{thm:size_constant}
  For a polygonal domain $Q$ of size $n$ and a convex $k$-gon
  $P$, the number of combinatorial changes to $\edtt$ for $\theta$
  increasing from 0 to $2\pi$ is $O(n^2)$ for a constant $k$.
\end{theorem}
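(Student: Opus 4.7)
The plan is to bound the total number of combinatorial changes by decomposing them into three categories, matching the structure already developed in the section: (i) changes to reported edges, (ii) label changes (further split into label changes to point sites and to edge sites), and (iii) changes to unreported edges. For each category I would invoke the bounds assembled above and then observe that, once $k$ is treated as a constant, every resulting expression collapses to $O(n^2)$.

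For category (i) and the point-site part of (ii), I would argue as follows. Every such change involves some hinge $H=(Q_H,P_H)$, and by the discussion preceding Lemma~\ref{lem:low_env_hinged}, each change corresponds to a breakpoint on the lower envelope of the associated family $\env_H$. Lemma~\ref{lem:low_env_hinged} bounds the number of breakpoints on each such lower envelope by $O(\dsn{k}{3}n)$. Summing over all $O(kn)$ hinges (there are $k$ corners of $P$ and $O(n)$ corners of $Q$) gives $O(k\,\dsn{k}{3}\,n^2)$, which is $O(n^2)$ when $k$ is constant. For the edge-site part of (ii), I would use the simple alignment argument already given: for a fixed edge $e$ of $Q$ and edge $g$ of $P$ there are $O(n)$ label changes, and summing over all $O(k)$ edges of $P$ and $O(n)$ edges of $Q$ yields $O(kn^2)=O(n^2)$.

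Category (iii) is where I expect the most care. I would use Lemma~\ref{lem:reported_edge}: any unreported edge is a diagonal of a convex $l$-gon with $l\le 3k$ whose sides are reported edges or portions of edge sites, and this $l$-gon is a face of the auxiliary graph $G_\theta$ whose combinatorial type remains fixed between consecutive reported-edge/label events. On such a fixed interval, every change to an unreported edge is determined by the quadruple of sites on one face boundary; since each face has at most $3k$ boundary sites, there are $\binom{3k}{4}=O(k^4)$ candidate quadruples per face, and as the section notes each quadruple triggers at most a constant number of flips. The number of intervals on which $G_\theta$ stays fixed is bounded by the total number of changes from categories (i) and (ii), which I have already shown to be $O(n^2)$ for constant $k$; multiplying by the $O(k^4)$ flips per interval per face and $O(n)$ faces still gives $O(n^2)$ once $k$ is constant.

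The main obstacle, and the step I would treat most carefully, is category (iii): one needs to verify that it is legitimate to multiply the number of $G_\theta$-stable intervals by a per-interval, per-face quadruple count without double-counting, and that the constant "at most four changes per group of four sites" claimed in the text indeed holds (this is where the forward reference to Section~\ref{sec:intersection} is used). Once these bookkeeping points are in place, the three bounds combine to give $O(n^2)$ combinatorial changes, proving the theorem.
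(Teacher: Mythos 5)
Your decomposition into (i) reported-edge changes, (ii) label changes (point sites plus edge sites), and (iii) unreported-edge changes is exactly the paper's, and your bounds for (i) and (ii) match: $O(k\,\dsn{k}{3}\,n^2)$ via Lemma~\ref{lem:low_env_hinged} summed over $O(kn)$ hinges, and $O(kn^2)$ for edge sites. The gap is in category (iii), and it is not a bookkeeping subtlety but an actual miscount.

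You bound the number of intervals on which $G_\theta$ is \emph{globally} unchanged by the total number of changes from (i)--(ii), i.e.\ $O(n^2)$ for constant $k$, and then multiply by $O(k^4)$ quadruples per face \emph{and} by $O(n)$ faces. That product is $O(n^2)\cdot O(k^4)\cdot O(n)=O(k^4n^3)$, which is $O(n^3)$ for constant $k$, not $O(n^2)$ as you assert. The paper avoids this extra factor of $n$ by counting \emph{per-face} intervals rather than global ones: for each face $g$ of $G_\theta$ the relevant intervals are delimited only by changes to the faces of $\edtt$ meeting $g$, and each reported-edge or label change perturbs only a bounded neighbourhood of $G_\theta$. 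Consequently the total number of face-interval pairs, summed over all faces $g$, is proportional to the number of changes from (i)--(ii), namely $O(k\,\dsn{k}{3}\,n^2)$, rather than that quantity times the $O(n)$ face count. Multiplying by the $O(k^4)$ quadruples per face-interval pair then gives the paper's $O(k^5\dsn{k}{3}n^2)$, which is $O(n^2)$ for constant $k$. To repair your argument you need to replace the global-interval$\times$face-count product with this localized charging scheme and justify that each change from (i)--(ii) opens only $O(1)$ new per-face intervals.
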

\subsection{The number of changes with respect to
  \texorpdfstring{$k$}{k}}
\label{sec:phase2}
We now consider $k$ as a variable and bound the number of changes to
$\edtt$.
Since each triangular face in $\edtt$ is defined by three elements
(edges or vertices) of $P$, we choose three elements of $P$ then use
their convex hull in the counting.  Then the number of faces in $\edtt$ for all these
convex hulls is at most $O(k^3n^2)$ for $\theta$ increasing from 0 to
$2\pi$, by Theorem~\ref{thm:size_constant}.  Let $\mathcal{T}$ be the set of all faces of $\edtt$ for the convex hull $P'$ for three elements
$B_1,B_2,B_3$ of $P$ such that the contact pairs inducing the face
have $B_1,B_2$, and $B_3$ as their elements.

Consider two faces $T$ and $T'$ of $\edtt$ for two distinct
orientations $\theta=\theta_1$ and $\theta=\theta_2$ with
$\theta_1<\theta_2$ that are defined by the same sites.  We consider
$T$ and $T'$ as distinct faces if there is any change to $T$ or $T'$
in $\edtt$ for $\theta$ from $\theta_1$ to $\theta_2$.  For a face
$T\in\mathcal{T}$, let $C(T)$ be the set of contact pairs which
defines $T$, and let $I(T)$ be the interval of $\theta$ at which $T$
appears to $\edtt$.

\begin{figure}[ht]
  \begin{center}
    \includegraphics[width=.35\textwidth]{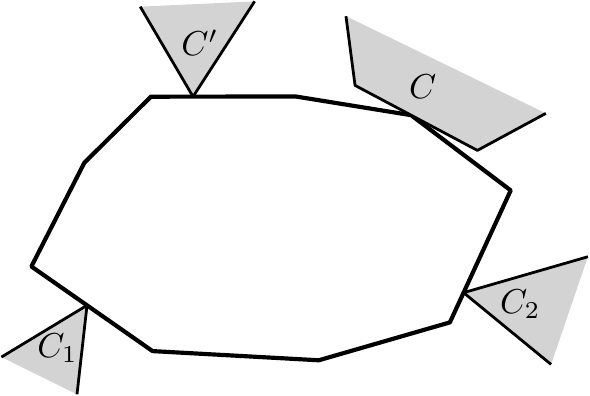}
  \end{center}
    \caption{
      The combinatorial change induced by $C_1,C_2,C$, and $C'$.}
    \label{fig:combinatorial_change}
  \end{figure}

  \begin{figure}[ht]
    \begin{center}
      \includegraphics[width=0.8\textwidth]{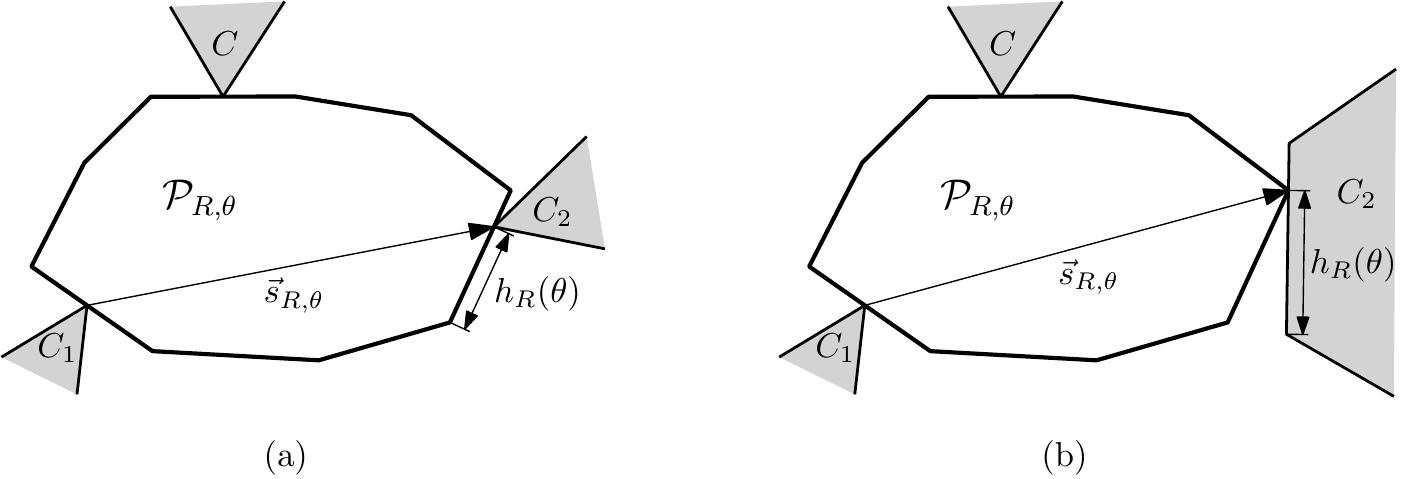}
    \end{center}
    \caption{$\conv_{R,\theta}, \vc{R}{\theta},$ and
      $\len[R]{\theta}$ for a restricted contact pair $R=(C,I)$ and
      $\theta$.  Let $\len[R]{\theta}$ be the distance from the
      clockwise endpoint (with respect to $\vc{R}{\theta}$) of the
      side element of $C_2$ to point element of $C_2$.  (a)
      $\len[R]{\theta}$ when $C_2$ is a corner contact.  (b)
      $\len[R]{\theta}$ when $C_2$ is a side contact.}
    \label{fig:definitions}
  \end{figure}

  For any two fixed contact pairs
  $(C_1,C_2)$, 
  we count the combinatorial changes involving $(C_1,C_2)$, and other
  contact pairs $C$ and $C'$ given in counterclockwise order $C_1, C_2, C,$
  and $C'$ along the boundary of $P$.  See Figure~\ref{fig:combinatorial_change}.
  Note that for $(C_1,C_2)$  with $C_i=(A_i,B_i)$ for $i=1,2$,
  we do not count the combinatorial changes for the pair
  if $A_1=A_2$ or $B_1=B_2$. The combinatorial changes not counted 
  are counted for other two fixed contact pairs.
  
  We use $(C,I)$ to denote a contact pair $C$ restricted to an
  interval $I$ of $\theta$.  Let $\mathcal{R}$ be the set of
  \emph{restricted contact pairs} $(C,I)$ such that
  $C(T)=\{C_1,C_2,C\}$ and $I=I(T)$ for a face $T\in\mathcal{T}$, and
  $C_1,C_2, C$ appear in counterclockwise order along $P$.  For a
  fixed restricted contact pair $R\in\mathcal{R}$ and $\theta\in I$
  for $R=(C,I)$, let $\conv_{R,\theta}$ denote the homothet of
  $P_{\theta}$ which satisfies $C_1,C_2$, and $C$.  We use
  $\vc{R}{\theta}$ to denote the ray from
  the point element of $C_1$ to the point element of $C_2$ in
  $\conv_{R,\theta}$.  Let $\len[R]{\theta}$ be the function that
  denotes the distance from the clockwise endpoint (with respect to
  $\vc{R}{\theta}$) of the side element of $C_2$ to point element of
  $C_2$ with respect to $\conv_{R,\theta}$.  See
  Figure~\ref{fig:definitions}.

  \begin{figure}[ht]
    \begin{center}
      \includegraphics[width=.6\textwidth]{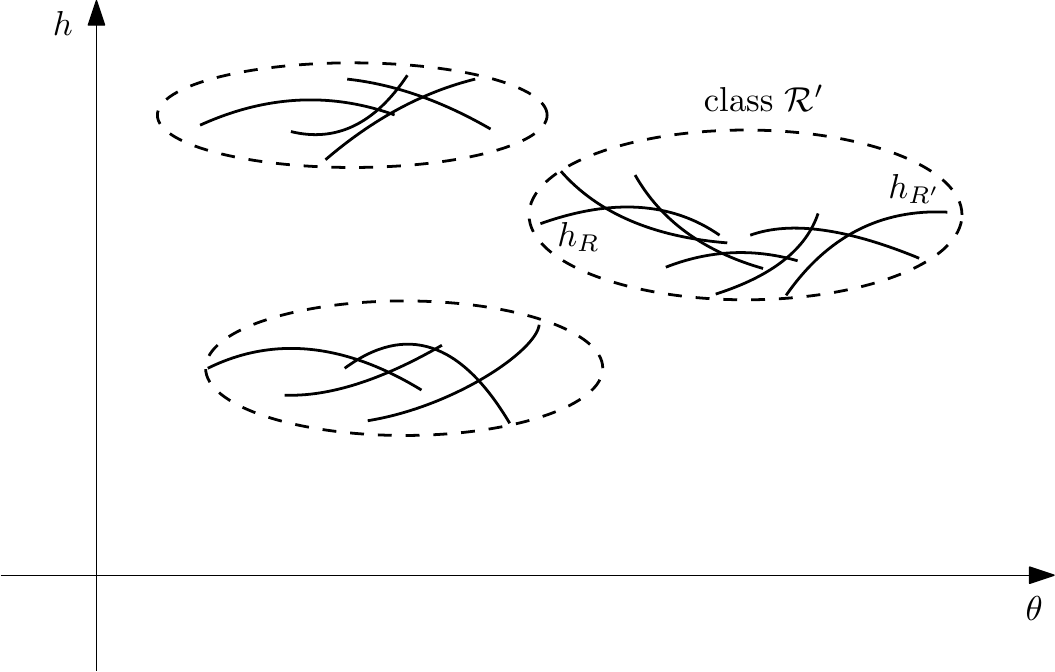}
    \end{center}
    \caption{Partitioning $\mathcal{R}$ by classes using the intersections
      of function graphs of $\env$. Class $\mathcal{R}'$ has $R$ and $R'$, and
      there is a sequence $f_1,\dots,f_j$ with
      $f_1=\lenf{R},f_j=\lenf{R'}$ such that $f_{j'}$ and $f_{j'+1}$
      intersect each other for every $j'=1,\dots,j-1$.}
    \label{fig:grouping}
  \end{figure}

  Observe that $\lenf{R}$ is a partially defined continuous function on $R\in\mathcal{R}$.
  Let $\env=\{\lenf{R}\mid R\in\mathcal{R}\}$.  Two restricted contact
  pairs $R,R'\in\mathcal{R}$ are in the same \emph{class} if and only
  if there is a subset $\{f_1,\dots,f_j\}\subset\env$ with
  $f_1=\lenf{R},f_j=\lenf{R'}$
  such that $f_{j'}$ and $f_{j'+1}$ intersect each other for every
  $j'=1,\dots,j-1$.  Figure~\ref{fig:grouping} illustrates the classes
  of $\mathcal{R}$.

  If a combinatorial change is induced by $C_1,C_2, C,$ and $C'$ at
  $\theta$, we have $\len[R]{\theta}=\len[R']{\theta}$ for distinct restricted contact pairs 
  $R=(C,I)$ and $R'=(C',I')$.
  See Figure~\ref{fig:combinatorial_change}.  Let
  $\mathcal{R'}\subset{\mathcal{R}}$ be a subclass of $\mathcal{R}$
  and let $\env'=\{\lenf{R}\mid R\in\mathcal{R'}\}$.  We verify that
  if $\conv_{R,\theta}$ is feasible, then $\len[R]{\theta}$ appears on the
  lower envelope or upper envelope of $\env'$.

    \begin{figure}[ht]
    \begin{center}
      \includegraphics[width=\textwidth]{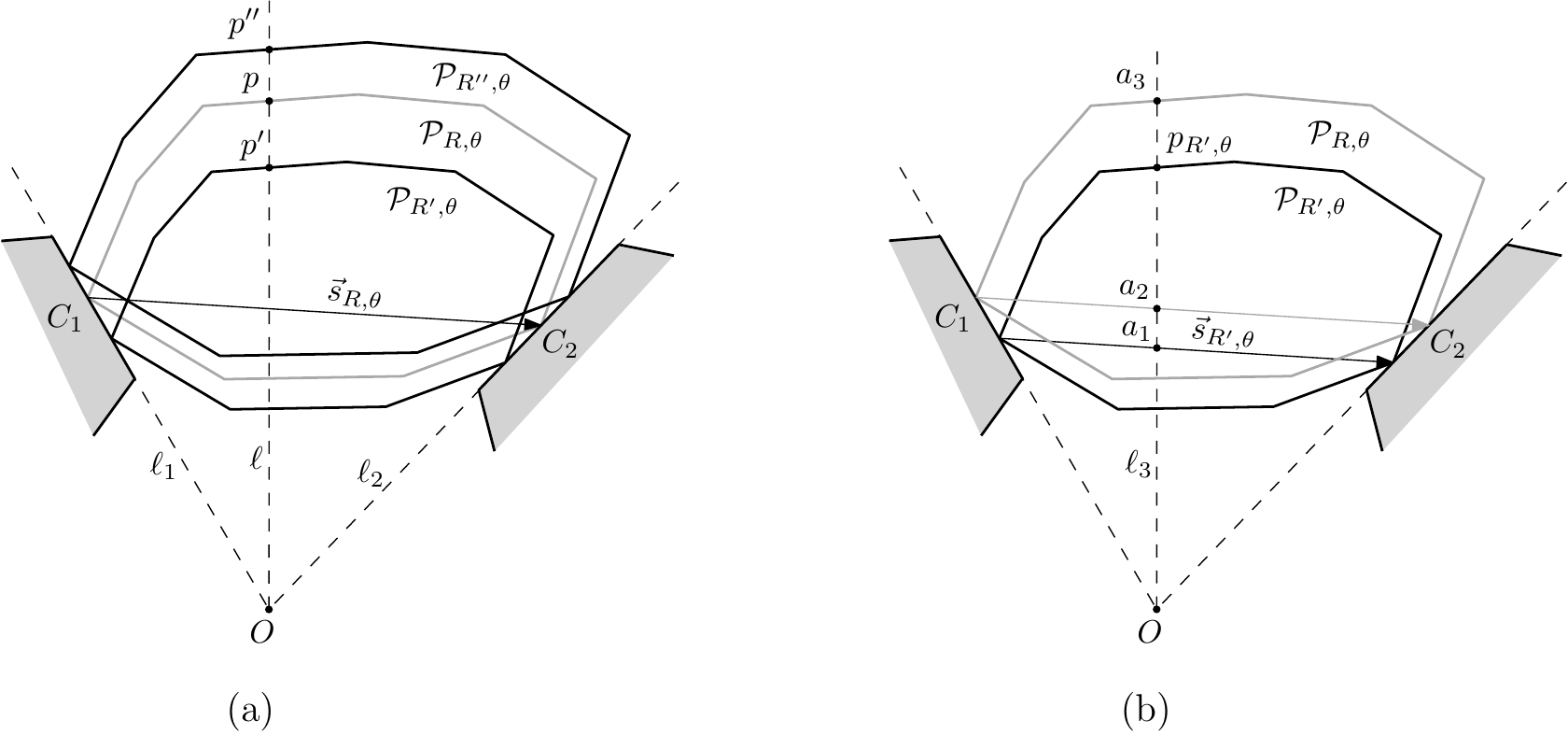}
    \end{center}
    \caption{Proof of Lemma~\ref{lem:class_different}. }
    \label{fig:uniqueness}
  \end{figure}
  
  \begin{lemma}
    \label{lem:class_different}
    Let $R,R',R''\in\mathcal{R}$ be the restricted contact pairs in
    the same class.  If
    $\len[R']{\theta}<\len[R]{\theta}<\len[R'']{\theta}$, then
    $\conv_{R,\theta}$ is not feasible.
  \end{lemma}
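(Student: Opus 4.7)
The plan is a proof by contradiction: assume $\conv_{R,\theta}$ is feasible and then produce an element of $Q$ lying in its interior. For fixed $\theta$ and fixed $C_1, C_2$, the scaled copies of $P_\theta$ that satisfy both $C_1$ and $C_2$ form a one-parameter family, and $\lenf{R}(\theta)$ serves as a monotone parameter along this family: the point element of $C_2$ slides monotonically along its corresponding side element as the polygon varies through the family. Thus the hypothesis $\lenf{R'}(\theta) < \lenf{R}(\theta) < \lenf{R''}(\theta)$ places $\conv_{R',\theta}$, $\conv_{R,\theta}$, and $\conv_{R'',\theta}$ in a strict order along this one-parameter family.

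Next I would exploit the class structure. Since $R, R'$, and $R''$ belong to the same class, there is a chain of $\lenf$-graph intersections in $\env$ connecting them. At each intersection of $\lenf{R_i}$ and $\lenf{R_{i+1}}$ at some orientation $\theta^*$, the polygons $\conv_{R_i, \theta^*}$ and $\conv_{R_{i+1}, \theta^*}$ coincide (they share $C_1, C_2$ and have the same $\lenf$ value), so the $Q$-elements of the two distinct third contacts of $R_i$ and $R_{i+1}$ lie simultaneously on the boundary of a single polygon. By walking along this chain I would argue that the $Q$-element $A'$ of $C'$ is attached to a specific arc of the polygon's boundary in a manner consistent across the chain, and similarly for the $Q$-element $A''$ of $C''$.

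The key geometric step is then the following: moving along the one-parameter family from $\conv_{R',\theta}$ to $\conv_{R,\theta}$ in the direction of increasing $\lenf$ pushes the boundary arc carrying $A'$ strictly outward past $A'$, so that $A'$ ends up in the interior of $\conv_{R,\theta}$, contradicting feasibility. A symmetric argument using $A''$ and $\conv_{R'',\theta}$ with decreasing $\lenf$ also yields a contradiction, so either half alone suffices.

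The main obstacle I expect is formalizing the ``outward push'' of the relevant boundary arc. This requires a careful analysis of the one-parameter family's deformation, which depends on whether $C_1$ and $C_2$ are corner or side contact pairs: in the pure side--side case the family is a genuine homothety centered at the intersection of the two supporting lines, which makes the outward expansion transparent, but the mixed cases call for slightly different reasoning about how each boundary point moves. The class structure is essential here, since it guarantees that $A'$ stays on a consistent outward-moving arc of the boundary throughout the chain of intersections, rather than slipping around to an arc that retreats as $\lenf$ increases.
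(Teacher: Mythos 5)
Your scaffolding matches the paper's proof: contradict feasibility by exhibiting a piece of $\partial Q$ in the interior of $\conv_{R,\theta}$, view the copies satisfying $C_1,C_2$ at fixed $\theta$ as a one-parameter family ordered by $\lenf$, and invoke the chain of $\lenf$-graph intersections coming from the class. But the single load-bearing claim --- that ``the class structure guarantees that $A'$ stays on a consistent outward-moving arc of the boundary'' --- is exactly the part that requires a proof, and you give no mechanism for it. The paper's mechanism is a continuity/topological argument on the chord $\vc{R}{\theta}$ (the segment joining the point elements of $C_1$ and $C_2$): at each chain intersection $\theta_{j'}$ the two scaled copies coincide, so one obtains a continuous path of chords from $\vc{R'}{\theta}$ to $\vc{R}{\theta}$; each intermediate chord lies in the interior of the corresponding empty face of $\edtt$, while $\pt{R'}{\theta}$ lies on $\partial Q$ and hence in no such interior; therefore the moving chord never sweeps across $\pt{R'}{\theta}$, which pins down the side of $\vc{R}{\theta}$ on which $\pt{R'}{\theta}$ lies. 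Without this the chain of intersections is decorative and the ``consistent arc'' claim is an assertion, not an argument.

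Two further issues. First, the ``outward push'' picture is only literally true in the side--side case (homothety about $O$); in general the boundary moves out along some rays through $O$ and in along others, which is precisely why the paper does not argue about an arc but about the location of a single point relative to the chord, and then splits into two cases (whether $\pt{R'}{\theta}$ lies closer to or farther from $O$ than its image point $a_3$ on $\partial\conv_{R,\theta}$). Second, your remark that ``either half alone suffices'' is wrong: in the paper the two halves are a genuine dichotomy, with $\pt{R'}{\theta}$ interior in one case and $\pt{R''}{\theta}$ interior in the other, and you must carry out the case analysis and use the hypothesis $\lenf{R'}<\lenf{R}<\lenf{R''}$ on both sides; dropping either $R'$ or $R''$ leaves one of the cases unhandled. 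Finally, a minor imprecision: the witness point is the point element of $C'$, which is a corner of $Q$ only when $C'$ is a corner contact; when $C'$ is a side contact it is a corner of $P$ sitting on an edge of $Q$, and the argument still works because that point lies on $\partial Q$, but writing ``the $Q$-element $A'$'' conflates the two.
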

   \begin{proof}
   Let $R=(C,I), R'=(C',I')$, and $R''=(C'',I'')$.
    Let $\ell_1$ be the line through the edge element of $C_1$,
    $\ell_2$ be the line through the edge element of $C_2$, and $O$ be
    the intersection of $\ell_1$ and $\ell_2$ unless they are
    parallel.
    For a fixed point $\bar{p}$ on the boundary of $P$, let $p',p,$
    and $p''$ be the points on the boundaries of
    $\conv_{R',\theta}, \conv_{R,\theta},$ and $\conv_{R'',\theta}$,
    respectively, corresponding to $\bar{p}$.  Then $p',p,$ and $p''$
    are on the same line $\ell$ and the order of $p',p,$ and $p''$
    (with $p$ in between $p'$ and $p''$) on $\ell$ remains the same
    for any choice of $\bar{p}$ because
    $\len[R']{\theta}<\len[R]{\theta}<\len[R'']{\theta}$.  Observe
    that $\ell$ passes through $O$ (if exists) and crosses both
    $\vc{R}{\theta}$ and
    $\vc{R''}{\theta}$. 
    See Figure~\ref{fig:uniqueness}(a).  

   Let $\pt{R'}{\theta}$ be the position of the point element of $C'$ in
  $\conv_{R',\theta}$.    
    Let $\ell_3$ be the line through $O$ and $\pt{R'}{\theta}$ if $O$
    exists.  Otherwise, let $\ell_3$ be the line through
    $\pt{R'}{\theta}$ and parallel to $\ell_1$ and $\ell_2$.  Let
    $a_1=\vc{R'}{\theta}\cap \ell_3, a_2=\vc{R}{\theta}\cap \ell_3$
    and let $a_3$ be the point on the boundary of $\conv_{R,\theta}$
    corresponding to $\pt{R'}{\theta}$ on $\conv_{R',\theta}$.
    Without loss of generality, assume that $\ell_3$ is vertical, and
    $\pt{R'}{\theta}$ lies above $a_1$.
    
    Consider the case that
    $\pt{R'}{\theta}$ lies below $a_3$.
    See Figure~\ref{fig:uniqueness}(b).
    We show that $\pt{R'}{\theta}$ does not lie on the segment
    $a_1a_2$.  Since $R',R$ are in the same class, there is a subset
    $\{f_1,\dots,f_j\}\subset\env$ with $f_1=\lenf{R'}, f_j=\lenf{R}$
    such that $f_{j'}$ and $f_{j'+1}$ intersect each other for every    
    $j'=1,\dots,j-1$.  Let $R_{j'}=(C_{j'},I_{j'})$ be the restricted contact pair
    with angle interval $I_{j'}$ such that $\lenf{R_{j'}}=f_{j'}$
    for $j'=1,\dots,j$.  
    Let $\conv'_{R_{j'},\theta'}$ be
    the rotated and scaled copy of the convex hull of $B_1, B_2$, and $B'$
    of $C_{j'}=(A',B')$,
      satisfying $C_1, C_2$ and $C_{j'}$.
    Then $\vc{R_{j'}}{\theta'}$ never intersects $\pt{R'}{\theta}$ for any
    $\theta'\in I_{j'}$ and $j'=1,\dots,j$ since 
    $\pt{R'}{\theta}$ does not lie in the interior of $\conv'_{R_{j'},\theta'}$ but
    $\vc{R_{j'}}{\theta'}$ is contained in the interior of
    $\conv'_{R_{j'},\theta'}$. 
     Let $\theta_{j'}\in I_{j'}$ be one of the
    orientations at which an intersection of $f_{j'}$ and $f_{j'+1}$
    occurs for each
    $j'=1,\dots,j-1$. Then, $\vc{R_{j'}}{\theta_{j'-1}}$ can
    be translated continuously to $\vc{R_{j'}}{\theta_{j'}}$ for
    $\theta$ from $\theta_{j'-1}$ to $\theta_{j'}$ for each
    $j'=2,\dots,j-1$.  Also, $\vc{R'}{\theta}$ can be translated
    continuously to $\vc{R'}{\theta_{1}}$, and $\vc{R}{\theta_{j-1}}$
    can be translated continuously to $\vc{R}{\theta}$.  If
    $\pt{R'}{\theta}$ lies on $a_1a_2$, $\pt{R'}{\theta}$ lies on
    $\vc{R_{j'}}{\theta'}$ for some $j'=1,\dots,j$ and
    $\theta'\in I_{j'}$.  This contradicts to the fact that
    $\vc{R_{j'}}{\theta'}$ never intersects $\pt{R'}{\theta}$.
    Thus, $\pt{R'}{\theta}$ lies above $a_1a_2$.  Then
    $\pt{R'}{\theta}$ lies in the interior of $a_2a_3$, that is,
    $\pt{R'}{\theta}$ lies in the interior of $\conv_{R,\theta}$, and
    therefore $\conv_{R,\theta}$ is not feasible.
    
     Now consider the case that $\pt{R'}{\theta}$ lies above $a_3$.
     Then by an argument similar to the one for the previous case, we can show that
     $\pt{R''}{\theta}$ is contained in the interior of $\conv_{R,\theta}$, where
     $\pt{R''}{\theta}$ is the position of the point element of $C''$ in $\conv_{R'',\theta}$.
     Therefore $\conv_{R,\theta}$ is not feasible.
  \end{proof}

  Let $\mathcal{R'}_{1i}$ be the subset of $\mathcal{R'}$ such that
  $R=(C,I)\in\mathcal{R'}$ belongs to $\mathcal{R'}_{1i}$ if
  $C=(e,v_i)$ is a side contact pair for some edge $e\in Q$.  Let
  $\mathcal{R'}_{2i}$ be the subset of $\mathcal{R'}$ such that
  $R=(C,I)\in\mathcal{R'}$ belongs to $\mathcal{R'}_{2i}$ if
  $C=(v,e_i)$ is a corner contact pair for some vertex $v\in Q$.  Suppose
  that $\left\lvert\mathcal{R'}\right\rvert=m$ and
  $\left\lvert\mathcal{R'}_{1i}\right\rvert=m_{1i},\left\lvert\mathcal{R'}_{2i}\right\rvert=m_{2i}$
  for $i=1,\dots,k$.  Let
  $\env'_{ji}=\{\lenf{R}\mid R\in\mathcal{R'}_{ji}\}$ for $j=1,2$.
  First, we count the breakpoints on the lower envelope of
  $\env'_{1i}$ and the lower envelope $\env'_{2i}$. The number of
  breakpoints on the lower envelope of $\env'_{1i}$ is $O(m_{1i})$
  since $\lenf{R}$ and $\lenf{R'}$ can intersect only at the
  boundaries of their intervals for $R,R'\in\mathcal{R'}_{1i}$.  Let
  $d_i$ be the number of intersections of the function graphs of $\env'_{2i}$,
  and let $d=\sum_{i=1}^{k}d_i$.  Then the number of breakpoints on
  the lower envelope of $\env'_{2i}$ is $O(m_{2i}+d_{i})$.

  \begin{lemma}
    \label{lem:breakpoints_2}
    The number of breakpoints on the lower envelope of $\env'_{1i}$ is
    $O(m_{1i})$ and $\env'_{2i}$ is $O(m_{2i}+d_{i})$ for each
    $i=1,\dots,k$.
  \end{lemma}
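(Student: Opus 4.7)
The strategy is to treat $\env'_{1i}$ and $\env'_{2i}$ separately, mirroring the structure of the proof of Lemma~\ref{lem:breakpoints} in the fixed-hinge setting.

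For $\env'_{1i}$, I would establish that any two functions $\lenf{R}, \lenf{R'}$ with $R, R' \in \mathcal{R'}_{1i}$ can intersect only at the boundaries of their respective intervals. Once this is shown, each function contributes $O(1)$ arcs to the lower envelope, so the total number of breakpoints is $O(m_{1i})$. The key geometric claim is that both contact pairs $C = (e, v_i)$ and $C' = (e', v_i)$ pin the same corner $v_i$ of $P_\theta$ to distinct edges of $Q$; together with $C_1, C_2$ fixed, the placement $\conv_{R,\theta}$ is already rigidly determined as $\theta$ varies, so the two induced length functions can only coincide at boundary configurations (where the relevant sites transition or $v_i$ reaches an endpoint of its edge).

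For $\env'_{2i}$, I would count breakpoints directly by splitting them into two classes. Interval-endpoint breakpoints occur at the two endpoints of each function's interval and therefore contribute at most $O(m_{2i})$ in total. The remaining breakpoints arise from pairwise intersections of function graphs that happen to appear on the lower envelope; by definition the total number of such intersections in $\env'_{2i}$ is $d_i$, which is an upper bound on the number of interior breakpoints. Summing gives the claimed $O(m_{2i} + d_i)$ bound.

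The main obstacle is verifying the geometric rigidity claim behind the $\env'_{1i}$ case: I must show that when two side contact pairs share the same vertex $v_i$ of $P$, the induced length functions cannot cross transversally in the interior of their intervals. This parallels the analogous claim used in Lemma~\ref{lem:breakpoints} for $\env_{1i}$, and I expect it to follow from a similar monotonicity analysis of how the scaling factor of $\conv_{R,\theta}$ evolves with $\theta$ once $v_i$ is constrained to lie on a specified edge of $Q$ in addition to the fixed contacts $C_1, C_2$.
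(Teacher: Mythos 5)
Your proof takes essentially the same approach as the paper's (which is given inline in the paragraph preceding the lemma rather than as a separate proof environment): functions in $\env'_{1i}$ intersect only at their interval endpoints, giving $O(m_{1i})$ breakpoints, while breakpoints of the lower envelope of $\env'_{2i}$ are either interval endpoints ($O(m_{2i})$ of them) or function-graph intersections, the latter bounded by $d_i$ since $d_i$ is by definition the total number of such intersections. Your geometric justification for the $\env'_{1i}$ case --- that equal $\lenf{}$-values at the same $\theta$ force the two placements $\conv_{R,\theta}$ and $\conv_{R',\theta}$ to coincide, which pins $v_i$ at the common endpoint of $e$ and $e'$ and hence at a boundary of the relevant domain intervals --- is the right way to fill in the intersection claim that the paper leaves implicit.
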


  \begin{figure}[ht]
    \begin{center}
      \includegraphics[width=0.35\textwidth]{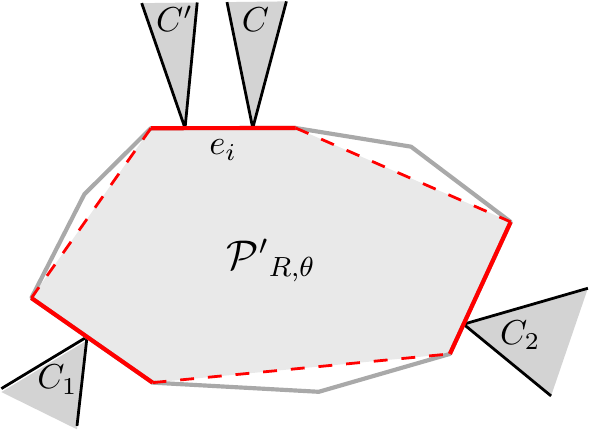}
    \end{center}
    \caption{For $R=(C,I)$ and $R'=(C',I')$ in $\mathcal{R'}_{2i}$,
      and orientation $\theta$ with $\len[R]{\theta}=\len[R']{\theta}$, there exists a
      rotated and scaled copy of the convex hull of $B_1, B_2$, and $e_i$,
      satisfying $C_1, C_2$ and $C, C'$, and feasible.
      The intersection $\len[R]{\theta}=\len[R']{\theta}$ corresponds to 
      a combinatorial change to $\edtt$
      for the convex hull of $B_1,B_2,$ and $e_i$.}
    \label{fig:aligned}
  \end{figure}

  Observe that each intersection of the function graphs of $\env'_{2i}$
  corresponds to a combinatorial change to \edt for the convex hull of
  $B_1,B_2,$ and $e_i$.  See Figure~\ref{fig:aligned}.  By
  Lemma~\ref{lem:class_different}, every combinatorial change appears
  on the lower envelope or upper envelope of $\env'$.  Here, we
  describe the case for the lower envelope of $\env'$.  We count the
  breakpoints of certain types on the lower envelope of $\env'$. 
  The combinatorial changes corresponding to the breakpoints not
  counted here will be counted for other choices of the fixed pair.
  We use \emph{$(a,b)$-change} to denote a combinatorial change
  induced by $a$ side contact pairs and $b$ corner contact pairs.

  \begin{figure}[ht]
    \begin{center}
      \includegraphics[width=0.35\textwidth]{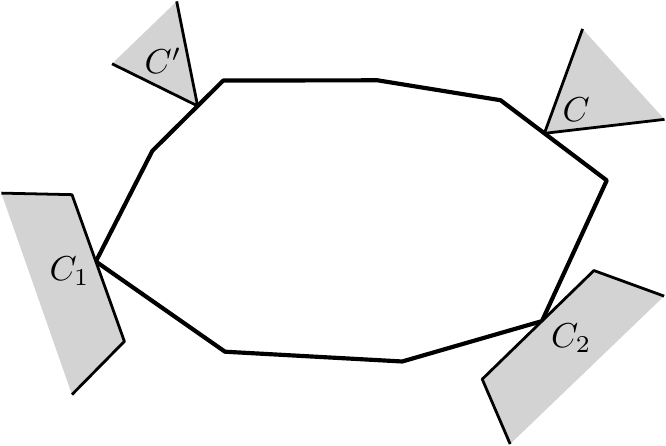}
    \end{center}
    \caption{The $(2,2)$-change induced by $C_1,C_2,C$ and $C'$ is
      counted when $C_2$ and $C$ are chosen as the fixed pair.}
    \label{fig:change_fixed}
  \end{figure}
 
  \myparagraph{Two side contact pairs.}  We count only $(4,0)$-changes
  in this case.  We count $(3,1)$ and $(2,2)$-changes appearing on the
  lower envelope in other choices of the fixed pair.  See
  Figure~\ref{fig:change_fixed}.
  By Corollary~\ref{cor:low_env2} and Lemma~\ref{lem:breakpoints_2},
  the number of breakpoints on the lower envelope of
  $\env'_1=\{f_i\mid i=1,\dots,k\}$ is
  $\sum_{i = 1}^{k} O(m_{1i}\dsn{k}{4}/k)=O(m\dsn{k}{4}/k)$, where
  $f_i$ is the lower envelope of $\env'_{1i}$.  In
  Section~\ref{sec:intersection}, we show that any two continuous
  pieces in $\env'_1$ intersect each other in at most two points.
  Since each $(4,0)$-change corresponds to a breakpoint on the lower
  envelope of $\env'_1$, the number of $(4,0)$-changes is
  $O(m\dsn{k}{4}/k)$.

  \myparagraph{Two corner contact pairs.}  We count only
  $(0,4)$-combinatorial changes in this case. Other changes are
  counted for other choices of the fixed pair.
  % since the number of intersections in $\env'_{2i}$ is $d_{i}$.
  By Corollary~\ref{cor:low_env2} and Lemma~\ref{lem:breakpoints_2},
  the number of breakpoints on the lower envelope of
  $\env'_2=\{f_i\mid i=1,\dots,k\}$ is
  $\sum_{i = 1}^{k}
  O((m_{2i}+d_{i})\dsn{k}{4}/k)=O((m+d)\dsn{k}{4}/k)$, where $f_i$ is
  the lower envelope of $\env'_{2i}$. In
  Section~\ref{sec:intersection}, we show that any two continuous
  pieces in $\env'_2$ intersect each other in at most two points.
  Since each $(0,4)$-change corresponds to a breakpoint of the lower
  envelope of $\env'_2$, the number of $(0,4)$-changes is
  $O((m+d)\dsn{k}{4}/k)$.

  \myparagraph{One side contact pair and one corner contact pair.}  We
  count all combinatorial changes other than $(4,0)$-changes and
  $(0,4)$-changes.  First, we count the breakpoints of the lower
  envelope of
  $\env'_{1}=\{\lenf{R}\mid R\in\bigcup_{i=1}^{k}\mathcal{R'}_{1i}\}$
  and of the lower envelope of
  $\env'_{2}=\{\lenf{R}\mid R\in\bigcup_{i=1}^{k}\mathcal{R'}_{2i}\}$,
  then count the breakpoints of the lower envelope of
  $\env'_{1}\cup \env'_{2}$.  In Section~\ref{sec:intersection}, we
  show that any two continuous pieces, both from either $\env'_{1}$ or
  $\env'_{2}$, intersect each other in at most two points.  The number
  of breakpoints of the lower envelope of $\env'_{1}$ can be computed
  in the same way as for counting $(4,0)$-changes, and the result is
  $O(m\dsn{k}{4}/k)$.  The number of breakpoints of the lower envelope
  of $\env'_{2}$ can be computed in the same way as for counting
  $(0,4)$-changes, and the result is $O((m+d)\dsn{k}{4}/k)$.  The
  number of breakpoints of the lower envelope of $\env'$ is
  $O((m+d)\dsn{k}{4}/k)$, since the number of breakpoints of the lower
  envelope of $\env'_{1}$ and of the lower envelope of $\env'_{2}$ is
  $O((m+d)\dsn{k}{4}/k)$, and any two continuous pieces, one from
  $\env'_{1}$ and one from $\env'_{2}$, intersect each other in at
  most four points by Section~\ref{sec:intersection}.  Thus, the
  number the combinatorial changes for the fixed contact pair is
  $O((m+d)\dsn{k}{4}/k)$.  \medskip

  Consider the sum $\sigma$ of the complexity $|\env'|=m$ of $\env'$
  over all classes for a fixed pair. Then the total sum of $\sigma$'s
  for all enumerations of fixed pairs is $O(k^3n^2)$ since
  $|\mathcal{T}|=O(k^3n^2)$.  Similarly, consider the sum $\xi$ of the
  number of intersections ($d$ in the complexities in the previous
  paragraphs) over all classes for a fixed pair.  Then the total sum
  of $\xi$'s for all enumerations of fixed pairs is $O(k^3n^2)$ since
  $\xi$ is bounded by the number of combinatorial changes to $\edtt$
  for the convex hulls of three elements of $P$.  Therefore, we
  conclude with the following theorem.

\begin{theorem} \label{thm:main_bound}
  For a polygonal domain $Q$ of size $n$ and a convex $k$-gon
  $P$, the number of combinatorial changes to the edge Delaunay
  triangulation of $Q$ under $P_\theta$-distance for $\theta$
  increasing from 0 to $2\pi$ is $O(k^2n^2\dsn{k}{4})$.
\end{theorem}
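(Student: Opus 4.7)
The plan is to aggregate the per-class envelope bounds developed throughout Section~\ref{sec:phase2}, first summing over all classes associated with a fixed pair $(C_1,C_2)$ and then summing over all choices of fixed pair. By Lemma~\ref{lem:class_different}, every combinatorial change induced by four contact pairs $C_1, C_2, C, C'$ (in counterclockwise order along $P$) surfaces as a breakpoint on the lower or upper envelope of $\env'$ for the class of $\mathcal{R}$ that contains the restricted versions of $C$ and $C'$. Since a quadruplet of ordered contact pairs admits only a constant number of splits into a fixed pair and a remaining pair, each combinatorial change is charged to an $O(1)$ number of (fixed-pair, class) combinations. Hence an upper bound on the total number of envelope breakpoints, summed across all such combinations, is (up to a constant) an upper bound on the number of combinatorial changes to $\edtt$.

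Next, I would invoke the case analysis by contact-type partition ($(4,0)$, $(0,4)$, and the mixed case of one side plus one corner contact) already carried out in the preceding paragraphs: each case yields the same asymptotic per-class bound of $O((m+d)\dsn{k}{4}/k)$, where $m=|\mathcal{R}'|$ is the number of functions in the class and $d$ counts the intersections among the corner-contact subfamily $\env'_{2i}$. Summing over all classes of a fixed pair therefore gives $O((\sigma+\xi)\dsn{k}{4}/k)$, where $\sigma$ and $\xi$ are the totals of $m$ and $d$ over the classes of that fixed pair.

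Finally, I would sum over all fixed pairs. The total of $\sigma$ across fixed pairs is at most $|\mathcal{T}|=O(k^3n^2)$: there are $O(k^3)$ choices of three elements of $P$, each contributing $O(n^2)$ faces by Theorem~\ref{thm:size_constant}, and each element of $\mathcal{R}'$ corresponds to such a face. The total of $\xi$ is also $O(k^3n^2)$, since each intersection counted by $\xi$ corresponds (as illustrated in Figure~\ref{fig:aligned}) to a combinatorial change to $\edtt$ for the convex hull of three elements of $P$, which is again bounded by $|\mathcal{T}|$. Substituting yields a total of $O(k^3n^2\cdot\dsn{k}{4}/k) = O(k^2n^2\dsn{k}{4})$, as claimed.

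The delicate step will be verifying that the partition among the contact-type cases covers every combinatorial change at least once while keeping the multiplicity at $O(1)$: the analysis must assign $(3,1)$ and $(2,2)$-changes to fixed-pair splits other than the two-side and two-corner choices used to count $(4,0)$- and $(0,4)$-changes, and must handle the upper envelope symmetrically to the lower envelope. Once this bookkeeping is confirmed, the arithmetic above produces the stated bound without further difficulty.
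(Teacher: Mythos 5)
Your proposal is correct and follows essentially the same line of reasoning as the paper: aggregating the per-class lower/upper envelope bounds of $O((m+d)\dsn{k}{4}/k)$ over all classes of a fixed contact pair via Lemma~\ref{lem:class_different}, and then summing over fixed pairs using the facts that $\sum\sigma$ and $\sum\xi$ are both $O(k^3n^2)$ (the former from $|\mathcal{T}|=O(k^3n^2)$ via Theorem~\ref{thm:size_constant}, the latter because intersections in $\env'_{2i}$ correspond to combinatorial changes for three-element convex hulls). The paper leaves the proof implicit in the paragraphs preceding the theorem, and your reconstruction correctly captures the bookkeeping, including the role of the case split by contact types and the symmetric treatment of the upper envelope.
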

Theorem~\ref{thm:main_bound} directly improves the time complexity 
of the algorithm by Chew and Kedem.
\begin{corollary}
  Given a polygonal domain $Q$ of size $n$ and a convex $k$-gon
  $P$, we can find a largest similar copy of $P$ that can be inscribed
  in $Q$ in $O(k^2n^2\dsn{k}{4}\log{n})$ time using $O(kn)$ space.
\end{corollary}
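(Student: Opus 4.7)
The plan is to plug Theorem~\ref{thm:main_bound} into the algorithmic framework of Chew and Kedem recalled in Section~\ref{sec:algo}; every ingredient of their procedure carries over unchanged, and only the total number of events in the rotational sweep is replaced by our sharper bound.

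Concretely, I would first construct $\edtt$ at $\theta=0$ in $O(kn\log(kn))$ time and $O(kn)$ space by building the edge Voronoi diagram of $S$ under $P_0$-distance and tracing it, and then initialize a priority queue storing, for each generalized edge of $\edtt$, the next orientation at which it undergoes an edge change, and, for each face of $\edtt$, the next orientation at which its label changes; both are computable in $O(1)$ per entry from the local geometry. I would then sweep $\theta$ from $0$ to $2\pi$: at each event, perform the $O(1)$ local updates to $\edtt$ (an edge flip in the case of an edge change, a relabeling in the case of a label change), refresh the event orientations of the $O(1)$ affected edges and faces in the queue, and, whenever a face $T$ disappears at $\theta_e$, combine $\theta_e$ with the creation orientation $\theta_s$ stored at $T$ to recover the lifespan $\mathcal{I}=[\theta_s,\theta_e]$, then evaluate in $O(1)$ the $\theta^*\in\mathcal{I}$ that maximizes the expansion factor $\delta$ of the $\homoPt$ associated with $T$ subject to feasibility, and update a running global maximum. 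The correctness argument from~\cite{chew1993convex} guarantees that a globally largest inscribed similar copy is realized as the associated $\homoPt$ of some face $T$ at some $\theta^*\in I(T)$, so this enumeration finds it.

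For complexity, Theorem~\ref{thm:main_bound} supplies the bound of $O(k^2n^2\dsn{k}{4})$ events, and Section~\ref{sec:algo} ensures that the priority queue holds at most $O(n)$ items at any moment, so each of the $O(1)$ queue operations per event costs $O(\log n)$; the $O(kn\log(kn))$ initialization is absorbed into this. Totaling gives $O(k^2n^2\dsn{k}{4}\log n)$ time within $O(kn)$ space, the latter dominated by the representation of $\edtt$ together with the priority queue. I do not expect a genuine obstacle here, since the hard combinatorial work has already been dispatched by Theorem~\ref{thm:main_bound}; the only point that warrants a line of care is that the per-event cost does not deteriorate when the sharper event bound is used, which it does not, because the $O(\log n)$ factor comes from the transient queue size of $O(n)$ rather than from the total number of events.
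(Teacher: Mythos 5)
Your proposal is correct and follows exactly the route the paper intends: the corollary has no explicit proof in the paper beyond the remark that Theorem~\ref{thm:main_bound} ``directly improves the time complexity of the algorithm by Chew and Kedem,'' and your write-up simply spells out how the new event bound is plugged into the rotational-sweep framework already described in Section~\ref{sec:algo}, with $O(\log n)$ per event from the $O(n)$-sized priority queue and $O(kn)$ space from the edge Voronoi/Delaunay representation. There is no gap and no meaningful divergence from the paper's approach.
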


\section{The number of critical orientations for four contact
  pairs}
\label{sec:intersection}
The orientation $\theta$ at which a combinatorial change to $\edtt$
occurs is called a \emph{critical orientation}.  We consider the
critical orientations $\theta$ at which $\conv_{\theta}$ has contact
with four contact pairs.
We count the number of critical orientations for each $(a,b)$-change
type.  The number of critical orientations of $(0,4)$-change is 2,
which is shown in Appendix B of~\cite{chew1993convex}.  So in this
section, we count the critical orientations for the other
types of combinatorial changes. The following table summarizes the
results.
\begin{center}
  \begin{tabular}{|c||c|c|c|c|c|} \hline Types of $(a,b)$-change &
    $(4,0)$ & $(3,1)$ & $(2,2)$ & $(1,3)$ & $(0,4)$ \\ \hline Number
    of critical orientations & 1 & 2 & 4 & 2 & 2 \\ \hline
  \end{tabular} 
\end{center}

\subsection{Common intersection and directed angles}
We need some technical lemmas before computing the critical
orientations.
For any two lines $\ell$ and $\ell'$ crossing each other in the plane,
the \emph{directed angle}, denoted by $\dangle(\ell,\ell')$, is the angle
from $\ell$ to $\ell'$, measured counterclockwise around their intersection
point.  This definition can be extended for three points $A,O,$ and
$B$, and we use $\dangle{AOB}=\dangle(AO,BO)$ to denote the directed
angle from the line through $AO$ to the line through $BO$, measured
counterclockwise around $O$.  There are some properties of directed
angles which we use in this section.

\begin{proposition}
  \label{prop:directed_angle}
  The following properties hold:
  \begin{enumerate}
  \item For any three points $A,B,$ and $O$, we have
    $\dangle{AOB}=-\dangle{BOA}$ modulo $\pi$.
  \item For any four points $A,B,X,$ and $Y$ such that no three of
    them are collinear, they are concyclic\footnote{A set of points
      are said to be \emph{concyclic} if they lie on a common circle.}
    if and only if $\dangle{XAY}=\dangle{XBY}$ modulo $\pi$.
  \item For any three lines $\ell_1,\ell_2,$ and $\ell_3$ such that no
    two lines are parallel, we have
    $\dangle(\ell_1,\ell_2)+\dangle(\ell_2,\ell_3)=\dangle(\ell_1,\ell_3)$
    modulo $\pi$.  For any four points $A,B,C,$ and $O$, we have
    $\dangle{AOB}+\dangle{BOC}=\dangle{AOC}$ modulo $\pi$.
  \item For any three lines $\ell_1,\ell_2,$ and $\ell_3$ such that no
    two lines are parallel, we have
    $\dangle(\ell_1,\ell_2)+\dangle(\ell_2,\ell_3)+\dangle(\ell_3,\ell_1)=0$
    modulo $\pi$.
  \end{enumerate}
\end{proposition}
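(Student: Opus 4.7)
The four claims are all classical facts about directed angles between lines modulo $\pi$, so the plan is to derive them from the observation that unoriented lines in the plane, equipped with counterclockwise angular displacement, form an abelian group isomorphic to $\mathbb{R}/\pi\mathbb{Z}$. With that in hand, properties 1, 3, and 4 are essentially formal consequences of the group structure, while property 2 is the inscribed angle theorem rephrased in the directed-angle language.

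For property 1, I would simply unfold the definition: $\dangle(AO,BO)$ is the counterclockwise rotation from line $AO$ to line $BO$ around their intersection $O$, so reversing the roles of the two lines reverses the sign of the rotation, and since we are working modulo $\pi$ (because $AO$ and $BO$ denote lines, not rays) no further adjustment is required. For property 3, the same group-theoretic viewpoint gives additivity: rotating $\ell_1$ onto $\ell_2$ and then $\ell_2$ onto $\ell_3$ produces the same element of $\mathbb{R}/\pi\mathbb{Z}$ as rotating $\ell_1$ directly onto $\ell_3$. The point-version follows by setting $\ell_1 = AO, \ell_2 = BO, \ell_3 = CO$. Property 4 is then an immediate corollary of property 3 obtained by taking $\ell_1 = \ell_3$ in the identity $\dangle(\ell_1,\ell_2)+\dangle(\ell_2,\ell_3)+\dangle(\ell_3,\ell_1) \equiv \dangle(\ell_1,\ell_1) \equiv 0 \pmod{\pi}$.

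The real content is property 2, and this is where I would invest the most care. The plan is to reduce it to the ordinary inscribed angle theorem. Fix the chord $XY$. For any point $Z$ not on line $XY$, the classical inscribed angle theorem says that the (unsigned) angle $\angle XZY$ is constant as $Z$ varies over one of the two arcs determined by $X$ and $Y$ on a circle through $X$ and $Y$, and the two arcs give supplementary angles. Switching to directed angles modulo $\pi$ collapses these supplementary cases into a single value, so $\dangle{XZY} \pmod{\pi}$ is constant over the entire circle through $X,Y,Z$ minus $\{X,Y\}$. This gives the forward direction: if $A,B,X,Y$ are concyclic with no three collinear, then $\dangle{XAY}\equiv\dangle{XBY}\pmod{\pi}$. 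For the converse, let $\Gamma$ be the unique circle through $X$, $Y$, and $A$ (which exists because $X,Y,A$ are not collinear), and let $B'$ be the second intersection of line $XB$ with $\Gamma$. By the forward direction $\dangle{XB'Y}\equiv\dangle{XAY}\equiv\dangle{XBY}\pmod{\pi}$, which forces line $YB'$ and line $YB$ to coincide, hence $B = B'\in\Gamma$.

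The main obstacle is the bookkeeping of the mod-$\pi$ identifications in the inscribed angle argument; in particular, one must handle degenerate orientations and the two-arc case uniformly, which is exactly what directed angles modulo $\pi$ are designed to do. Once property 2 is established, properties 1, 3, and 4 are short formal manipulations.
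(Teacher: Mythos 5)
The paper states Proposition~\ref{prop:directed_angle} without any proof; the authors treat these as standard, well-known facts about directed angles modulo $\pi$ (only Miquel's theorem, two lemmas later, gets a citation). So there is no paper proof to compare against, and I evaluate your argument on its own terms.

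Your derivation is correct. Properties~1, 3, and~4 are indeed formal consequences of treating unoriented lines through a point as a torsor under $\mathbb{R}/\pi\mathbb{Z}$. In property~4 you also tacitly invoke property~1 (or the obvious $\dangle(\ell,\ell)\equiv 0$) to close the cycle, i.e.\ $\dangle(\ell_1,\ell_3)+\dangle(\ell_3,\ell_1)\equiv 0$; your ``taking $\ell_1=\ell_3$'' shorthand blurs this slightly but does not break anything. For property~2 the reduction to the classical inscribed angle theorem is the right route, and your observation that the two supplementary arcs collapse to a single class modulo $\pi$ is exactly why directed angles are the correct language here. The one step that deserves a word is the converse: the ``second intersection $B'$ of line $XB$ with $\Gamma$'' can fail to exist if $XB$ is tangent to $\Gamma$ at $X$. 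In that case, however, the tangent-chord angle identity $\dangle(XB,XY)\equiv\dangle{XAY}$ together with the hypothesis $\dangle{XAY}\equiv\dangle{XBY}$ forces line $BY$ to be parallel to (hence equal to) line $XY$, placing $B$ on line $XY$ and violating the non-collinearity assumption; so the tangency case is vacuous and your converse is complete. With that remark in place the proof is clean, and it usefully fills in what the paper leaves to folklore.
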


\begin{figure}[ht]
  \begin{center}
    \includegraphics[width=.9\textwidth]{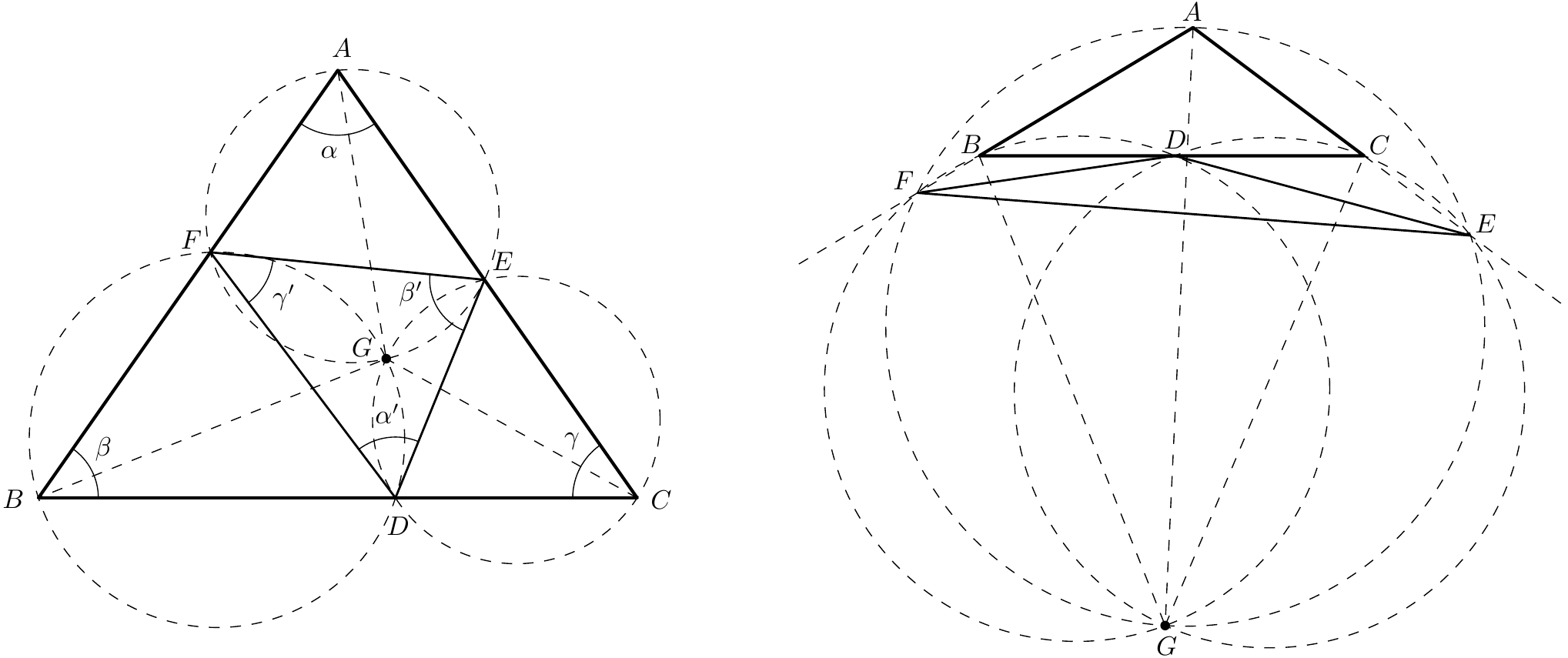}
  \end{center}
  \caption{Some possible configurations of Lemmas~\ref{lem:miquel} and~\ref{lem:triangle}.}
  \label{fig:configuration}
\end{figure}

See Figure~\ref{fig:configuration} for an illustration of Lemmas~\ref{lem:miquel} and~\ref{lem:triangle}.

\begin{lemma}[Miquel's Theorem~\cite{coxeter1967geometry}]
  \label{lem:miquel}
  Let $A, B$, and $C$ be the corners of a triangle, and let $D, E$,
  and $F$ be the points on the lines containing $CB$, $AC$, and $AB$,
  respectively.  Then the three circumcircles to triangles $EAF$,
  $FBD$ and $DCE$ have a common point of intersection.
\end{lemma}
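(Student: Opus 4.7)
The plan is to use directed angles and the concyclicity characterization stated in Proposition~\ref{prop:directed_angle}(2). Let $M$ be the second point of intersection of the circumcircles of $\triangle EAF$ and $\triangle FBD$ (these two circles already share the point $F$, so generically a second intersection $M$ exists; the degenerate case where they are tangent at $F$ is handled separately as a limit). I would then prove that $M$ lies on the circumcircle of $\triangle DCE$, which is exactly the statement that the three circles meet at a common point.

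To show $M$ lies on the third circle, I would apply Proposition~\ref{prop:directed_angle}(2) twice to the two circles that defined $M$: since $E,A,F,M$ are concyclic we have $\dangle{EMF}\equiv\dangle{EAF}$ modulo $\pi$, and since $F,B,D,M$ are concyclic we have $\dangle{FMD}\equiv\dangle{FBD}$ modulo $\pi$. Adding these using Proposition~\ref{prop:directed_angle}(3) gives
\[
\dangle{EMD}\equiv \dangle{EAF}+\dangle{FBD}\pmod{\pi}.
\]
The goal is then to identify the right-hand side with $\dangle{ECD}$ modulo $\pi$, which by Proposition~\ref{prop:directed_angle}(2) applied in reverse would place $E,C,D,M$ on a common circle.

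The identification reduces to a triangle-angle-sum computation in the directed-angle setting. Since $E$ lies on line $AC$ and $F$ lies on line $AB$, one has $\dangle{EAF}\equiv\dangle(AC,AB)$ modulo $\pi$; similarly $\dangle{FBD}\equiv\dangle(BA,BC)$ and $\dangle{DCE}\equiv\dangle(CB,CA)$ modulo $\pi$, all by Proposition~\ref{prop:directed_angle}(1) and the collinearity of the feet $D,E,F$ with the corresponding sides. Applying Proposition~\ref{prop:directed_angle}(4) to the three lines $AB,BC,CA$ gives
\[
\dangle(AC,AB)+\dangle(BA,BC)+\dangle(CB,CA)\equiv 0\pmod{\pi},
\]
hence $\dangle{EAF}+\dangle{FBD}\equiv -\dangle{DCE}\equiv \dangle{ECD}$ modulo $\pi$. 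Combining with the displayed equation for $\dangle{EMD}$ yields $\dangle{EMD}\equiv\dangle{ECD}$ modulo $\pi$, so Proposition~\ref{prop:directed_angle}(2) places $M$ on the circumcircle of $\triangle DCE$, as required.

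The conceptual core is entirely contained in the directed-angle arithmetic, so the main obstacle is bookkeeping rather than geometry: making sure the orientations of each directed angle are consistent when $D$, $E$, or $F$ land on the extensions of the sides rather than on the segments, and handling the degenerate configurations where two of the three circles are tangent at $F$ or where some of $A,B,C,D,E,F$ coincide. Working modulo $\pi$ throughout, as Proposition~\ref{prop:directed_angle} already does, subsumes the various case distinctions that a metric (unsigned) angle proof would have to treat individually, and the tangency case follows by a continuity/limiting argument on the position of $D$, $E$, or $F$.
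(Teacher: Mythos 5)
The paper does not actually prove Lemma~\ref{lem:miquel}; it is cited as a known result from Coxeter~\cite{coxeter1967geometry}, so there is no proof to compare against. That said, your argument is a correct rendition of the standard directed-angle proof of Miquel's theorem: the identification $\dangle{EAF}\equiv\dangle(AC,AB)$, $\dangle{FBD}\equiv\dangle(AB,BC)$, $\dangle{DCE}\equiv\dangle(BC,CA)$ follows from the collinearities $E\in AC$, $F\in AB$, $D\in BC$, their sum vanishes modulo $\pi$ by Proposition~\ref{prop:directed_angle}(4), and then Proposition~\ref{prop:directed_angle}(2)--(3) close the argument exactly as you lay out. This fits naturally with the directed-angle machinery the paper sets up in Proposition~\ref{prop:directed_angle} and uses in the proof of the companion Lemma~\ref{lem:triangle}, so the proposal is a reasonable and stylistically consistent way to make the citation self-contained; the only thing I would tighten is the discussion of degeneracies, which as you note is subsumed by working modulo $\pi$ except for the tangency/coincidence cases that need the limiting remark you already include.
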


\begin{lemma}
  \label{lem:triangle}
  Let $A, B$, and $C$ be the corners of a triangle, and let $D, E$,
  and $F$ be the points on the lines containing $CB$, $AC$, and $AB$,
  respectively.  Let $G$ be the intersection of three circumcircles to
  triangles $EAF$, $FBD$ and $DCE$, and let
  $\dangle{BAC}=\alpha, \dangle{CBA}=\beta, \dangle{ACB}=\gamma,
  \dangle{EDF}=\alpha', \allowbreak\dangle{FED}=\beta',
  \dangle{DFE}=\gamma'$ modulo $\pi$ for fixed angles $\alpha, \beta, \gamma,
  \alpha', \beta'$, and $\gamma'$.
  \begin{enumerate}
  \item
    $\dangle{AGB}=\gamma-\alpha'-\beta',
    \dangle{BGC}=\alpha-\beta'-\gamma'$, and
    $\dangle{CGA}=\beta-\gamma'-\alpha'$ modulo $\pi$, regardless of
    $\dangle{EFA}$.
  \item If $ABC$ or $DEF$ remains unchanged while
    $\dangle{EFA}$ is changing, $G$ remains unchanged.
  \item
    $\dangle{GDF},\dangle{EDG},\dangle{GED},\dangle{FEG},\dangle{GFE},$
    and $\dangle{DFG}$ remain unchanged while $\dangle{EFA}$ is
    changing.
  \end{enumerate}
\end{lemma}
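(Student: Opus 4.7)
The plan is to chase directed angles using Proposition~\ref{prop:directed_angle} together with the three concyclicities supplied by Miquel's theorem (Lemma~\ref{lem:miquel}): $A,E,F,G$ lie on a common circle, as do $F,B,D,G$ and $D,C,E,G$. I will address the three parts in order; part~1 is the main computation, and parts~2 and~3 fall out of it with little extra work.

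For part~1, the concyclicity of $A,E,F,G$ gives $\dangle{AGF}\equiv\dangle{AEF}\pmod{\pi}$ by property~2, and $F,B,D,G$ concyclic gives $\dangle{FGB}\equiv\dangle{FDB}$. Adding these via property~3 yields $\dangle{AGB}\equiv\dangle{AEF}+\dangle{FDB}$. Because $E\in\ell_{AC}$ and $D\in\ell_{BC}$, the right-hand directed angles depend only on the lines $\ell_{AC},\ell_{EF}$ and $\ell_{FD},\ell_{BC}$; inserting the inner-triangle angle $\gamma'=\dangle(\ell_{FD},\ell_{EF})$ and telescoping collapses the sum to $\dangle(\ell_{AC},\ell_{BC})+\gamma'=\gamma+\gamma'$, which equals $\gamma-\alpha'-\beta'$ after applying $\alpha'+\beta'+\gamma'\equiv 0\pmod{\pi}$. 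The formulas for $\dangle{BGC}$ and $\dangle{CGA}$ follow by cyclically permuting which pair of Miquel circles supplies the inscribed-angle identities; crucially, the expressions do not involve $\dangle{EFA}$ at all.

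Part~2 is then a corollary. When $ABC$ is held fixed, the locus of $G$ satisfying $\dangle{AGB}\equiv\gamma-\alpha'-\beta'$ is a fixed circle through $A$ and $B$, and the loci coming from $\dangle{BGC}$ and $\dangle{CGA}$ are fixed circles through $B,C$ and $C,A$. Any two of these three circles share one of $A,B,C$, so by elementary circle geometry they meet in at most one further point, which must be $G$; hence $G$ is pinned down by the six shape angles alone and does not move with $\dangle{EFA}$. The case in which $DEF$ rather than $ABC$ is held fixed is symmetric: applying property~2 in the other direction on the same three Miquel circles yields $\dangle{DGE}\equiv-\gamma,\ \dangle{EGF}\equiv-\alpha,\ \dangle{FGD}\equiv-\beta$, and the identical circle-intersection argument fixes $G$ relative to $DEF$.

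For part~3, once $G$ is known to be a fixed point, I would re-apply property~2 so as to express each of the six target angles as an inscribed angle at a vertex of the outer triangle that involves $G$. For instance, circle $FBDG$ gives $\dangle{GDF}\equiv\dangle{GBF}\equiv\dangle(\ell_{BG},\ell_{AB})$, where the second equality uses $F\in\ell_{AB}$; this is invariant because $A$, $B$, and $G$ are all fixed under the variation of $\dangle{EFA}$. The other five directed angles reduce in the same way using circles $AEFG$ and $DCEG$, and the alternative in which $DEF$ is held fixed is immediate, since then $D,E,F,G$ themselves do not move. The steps that require real care are the sign bookkeeping in the telescoping of part~1 and the non-degeneracy assumption behind ``at most one further intersection'' in part~2 (one has to exclude configurations in which $G$ coincides with a vertex of $ABC$ or of $DEF$); both are routine but easy to botch.
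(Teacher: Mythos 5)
Your proof is correct and follows essentially the same route as the paper: a directed-angle chase from the three Miquel concyclicities for part~1, the fixed-circle argument for part~2, and a concyclicity transfer for part~3. The only differences are cosmetic --- you telescope across line $EF$ in part~1 where the paper instead introduces and eliminates six auxiliary half-angles around $G$, and in part~2 you spell out $\dangle{DGE}\equiv-\gamma$, $\dangle{EGF}\equiv-\alpha$, $\dangle{FGD}\equiv-\beta$ for the $DEF$-fixed case that the paper treats more tersely (and your non-degeneracy caveat is a fair observation that the paper also glosses over).
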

\begin{proof} See Figure~\ref{fig:configuration} for an illustration.
  $ $
  \begin{enumerate}
  \item Suppose that
    $\dangle{GDF}=\alpha_1,\dangle{EDG}=\alpha_2,\dangle{GED}=\beta_1,\dangle{FEG}=\beta_2,\dangle{GFE}=\gamma_1$,
    and $\dangle{DFG}=\gamma_2$ modulo $\pi$. Then,
    \begin{align*}
      \alpha=\beta_2+\gamma_1, && \beta=\gamma_2+\alpha_1, && \gamma=\alpha_2+\beta_1, && (\text{by Proposition~}\ref{prop:directed_angle}.2) \\
      \alpha'=\alpha_1+\alpha_2, && \beta'=\beta_1+\beta_2, && \gamma'=\gamma_1+\gamma_2
    \end{align*}
    modulo $\pi$. Using the equations above, we have
    \begin{align*}
      \dangle{AGB}&=-\dangle{GBA}-\dangle{BAG}=-\dangle{GBF}-\dangle{FAG}=-\dangle{GDF}-\dangle{FEG} \\
                  &=-\alpha_1-\beta_2=\gamma-\alpha'-\beta'
    \end{align*}
    modulo $\pi$. We can show $\dangle{BGC}=\alpha-\beta'-\gamma'$,
    and $\dangle{CGA}=\beta-\gamma'-\alpha'$ modulo $\pi$ in the same
    way.
  \item Suppose that triangle $ABC$ remain unchanged while
    $\dangle{EFA}$ is changing.  By Lemma~\ref{lem:triangle}.1, the
    circumcircles to triangles $ABG,BCG,$ and $CAG$ remain unchanged.
    Thus, their common intersection $G$ also remains unchanged.  When
    the triangle $DEF$ remains unchanged, $G$ also remains unchanged
    since the circumcircles to triangles $EAF, FBD,$ and $DCE$ remain
    unchanged.
  \item It is deduced by Lemma~\ref{lem:triangle}.2.
  \end{enumerate}
\end{proof}

\subsection{Counting critical orientations}
Now we count the critical orientations which is induced by the same
set of contact pairs. To do this, we count the orientations $\theta$
at which $\homoPt$ satisfies the four contact pairs, $C_i=(A_i,B_i)$
for $i=1,\ldots, 4$. We consider each type of $(a,b)$-changes in the
following.

\begin{figure}[ht]
  \begin{center}
    \includegraphics[width=\textwidth]{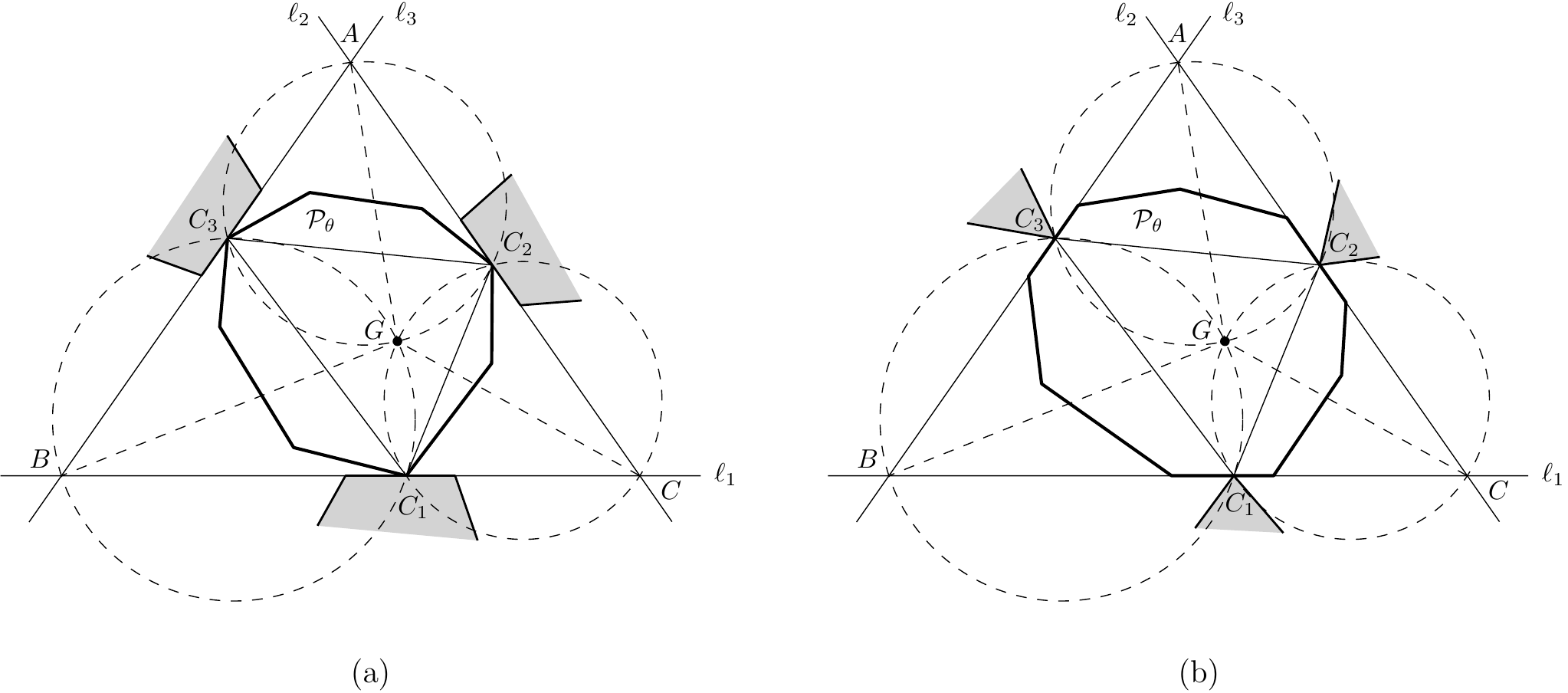}
  \end{center}
  \caption{(a) $\homoPt$ satisfies the side contact pairs
    $C_i=(A_i,B_i)$ for $i=1,2,3$.  (b) $\homoPt$ satisfies the corner
    contact pairs $C_i=(A_i,B_i)$ for $i=1,2,3$. }
  \label{fig:geometric}
\end{figure}

\myparagraph{$(4,0)$-changes and $(3,1)$-changes.}  Suppose that
$C_1,C_2$, and $C_3$ are side contact pairs and $\homoPt$ satisfies
them.  Let $\ell_i$ be the line through
$A_i$ for each $i=1, 2, 3$, and let
$A=\ell_2\cap \ell_3, B=\ell_1\cap \ell_3$, and $C=\ell_1\cap \ell_2$.
By Lemma~\ref{lem:miquel}, the circumcircles to triangles
$AB_2B_3,BB_3B_1,$ and $CB_1B_2$ have a common intersection point,
denoted by $G$.  See Figure~\ref{fig:geometric}(a).  By
Lemma~\ref{lem:triangle}.2, $G$ remains unchanged for any $\theta$
since $A,B,$ and $C$ remain unchanged.  Also, $\dangle{B_2B_1G}$ and
$\dangle{GB_2B_1}$ remain unchanged by Lemma~\ref{lem:triangle}.3.

We use $\ora{XY}$ to denote the vector from point $X$ to point $Y$.
Let $\aff_{s,\vartheta}$ denote the affine transformation with scale
factor $s$ and rotation angle $\vartheta$ in counterclockwise
direction.  Then there exist $s$ and $\vartheta$ such that
$\ora{GB_2}=\aff_{s,\vartheta}\ora{GB_1}$ for every $\homoPt$
satisfying $C_1,C_2$, and $C_3$.  For a vertex $V$ of $\conv$,
$\ora{GV}=t_1\ora{GB_1}+t_2\ora{GB_2}$ for constants $t_1$ and $t_2$.
Therefore, $\ora{GV}=\aff_{s',\vartheta'}\ora{GB_1}$ for some $s'$ and
$\vartheta'$, and the trace of $V$ is a line segment.

If $C_4$ is also a side contact pair, the traces of $B_4$ and $A_4$
intersect each other at most once since the trace of $B_4$ is a line
segment.  Therefore, there is at most one critical orientation for a
$(4,0)$-change.  Consider the case that $C_4$ is a corner contact
pair.  Assume that $B_4=V_1V_2$, for vertices $V_1$ and $V_2$ of $P$.
Since the trace of $V_1$ is a segment, $\ora{GV_1}$ can be
parametrized to $t\vec{v}_1+\vec{c}_1$ for some $\vec{v}_1,\vec{c}_1$
and $t=t(\theta)$.  Since $GV_2=\aff_{s_1,\vartheta_1}\ora{GV_1}$ for
some $s_1$ and $\vartheta_1$, $\ora{GV_2}$ can be parametrized to
$t\vec{v}_2+\vec{c}_2$ for some $\vec{v}_2,\vec{c}_2$ and
$t=t(\theta)$.  If $\homoPt$ satisfies $C_4$, we have
$\ora{GA_4}=t'\ora{GV_1}+(1-t')\ora{GV_2}=t't(\vec{v}_1-\vec{v}_2)+t'(\vec{c}_1-\vec{c}_2)+t\vec{v}_2+\vec{c}_2$
for some $t$ and $t'\in[0,1]$.  Thus, we obtain two equations for $t'$
and $t$ in $x$- and $y$-coordinates.  By eliminating $t'$ using these
two equations, we obtain a quadratic equation for $t$ which has at
most two solutions.  Therefore, there are at most two critical
orientations for a $(3,1)$-change.

\myparagraph{$(1,3)$-changes.}  Suppose that $C_1,C_2$, and $C_3$ are
corner contact pairs and $C_4$ is a side contact pair.  Let $\homoPt$
satisfies these three corner contact pairs.  Let $\ell_i$ be the lines
through $B_i$ for each $i=1, 2, 3$, and let
$A=\ell_2\cap \ell_3,B=\ell_1\cap \ell_3$, and $C=\ell_1\cap \ell_2$.
By Lemma~\ref{lem:miquel}, the circumcircles to triangles
$AA_2A_3,BA_3A_1$, and $CA_1A_2$ have a common intersection point $G$.
See Figure~\ref{fig:geometric}(b).  By Lemma~\ref{lem:triangle}.2, $G$
remains unchanged while $\theta$ increases since $A_1,A_2,$ and $A_3$
remain unchanged.  Also, $\dangle{ABG}$ and $\dangle{GAB}$ remain
unchanged by Lemma~\ref{lem:triangle}.3.  Therefore,
$\ora{GB}=\aff_{s,\vartheta}\ora{GA}$ for some $s$ and $\vartheta$.
Then, $\ora{GB_4}=t_1\ora{GA}+t_2\ora{GB}$ for constants $t_1$ and
$t_2$.  Therefore, $\ora{GB_4}=\aff_{s',\vartheta'}\ora{GA}$ for some
$s'$ and $\vartheta'$, and the trace of $B_4$ is an arc of a circle
through $G$.  Since the traces of $B_4$ and $A_4$ intersect each other
in at most two points, there are at most two critical orientations for
a $(1,3)$-change.

\begin{figure}[ht]
  \begin{center}
    \includegraphics[width=0.4\textwidth]{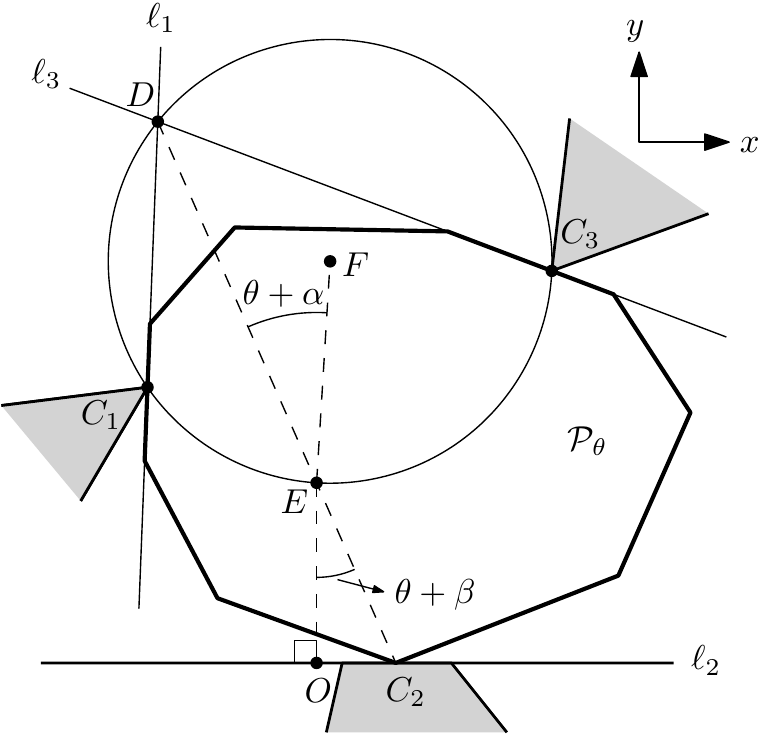}
  \end{center}
  \caption{
    $\homoPt$ satisfies the contact pairs $C_i=(A_i,B_i)$ for
    $i=1,2,3$. }
  \label{fig:22case}
\end{figure}
\myparagraph{$(2,2)$-changes.}  Suppose that $C_1$ and $C_3$ are
corner contact pairs and $C_2$ and $C_4$ are side contact pairs, and
$\homoPt$ satisfies $C_1,C_2$, and $C_3$.  Let $\ell_1, \ell_2,$ and
$\ell_3$ be the lines through $B_1,A_2,$ and $B_3$, respectively, and
let $D=\ell_1\cap \ell_3$.  Let $F$ be the center of the circumcircle
to triangle $A_1A_3D$, and let $E$ be the intersection of the
circumcircle and $B_2D$, other than $D$.  Let $O$ be the perpendicular
foot of $E$ to $\ell_2$.  Let $\theta+\alpha$ be the angle from
$\ora{EF}$ to $\ora{ED}$ and $\theta+\beta$ be the angle from
$\ora{EO}$ to $\ora{EB_2}$ counterclockwise.  See
Figure~\ref{fig:22case}.  Since the angle between $\ora{DA_1}$ and
$\ora{DB_2}$ remains unchanged while $\theta$ increases, $E,F$ and $O$
remain unchanged.  We have $\ora{B_2B_4}=\aff_{s,\vartheta}\ora{B_2D}$
for some $s$ and $\vartheta$.  We also have
$\ora{OB_4}=\ora{OB_2}+\ora{B_2B_4}=\ora{OB_2}+\aff_{s,\vartheta}\ora{B_2D}=
(|EO|\tan(\theta+\beta),0)+\aff_{s,\vartheta}(-|EO|\tan(\theta+\beta)-
2|EF|\cos(\theta+\alpha)\sin(\theta+\beta),|EO|+2|EF|\cos(\theta+\alpha)\cos(\theta+\beta))$.
Observe that each point $V$ on segment $A_4$ can be parametrized to
$\ora{OV}=s\vec{v}+\vec{c}$ for some $\vec{v}$ and $\vec{c}$.  If
$B_4$ touches $A_4$, then $s\vec{v}+\vec{c}=\ora{OB_4}$
is satisfied for some $s$. Thus, we obtain two equations for $\theta$
and $s$ in $x$- and $y$-coordinates.  By eliminating $s$ using these
two equations, we obtain the equation for $\theta$.  By substituting
$t=\tan{\frac{\theta+\beta}{2}}$, we obtain a quartic equation for $t$
which has at most four solutions.  Therefore, there are at most four
critical orientations for a $(2,2)$-change.

\bibliography{reference.bib} \bibliographystyle{plainurl}
\end{document}